\numberwithin{equation}{section}
\newcommand{\gap}{\mathrm{gap}}
\newcommand{\DRS}{\mathrm{DRS}}
\newcommand{\GJ}{G}
\newcommand{\cfrp}{CFR$^+$}
\title{\textsf{\textbf{A Direct Second-Order Method for Solving \\
Two-Player Zero-Sum Games}}}
\author{
\textsf{David Yang}\thanks{Columbia University. Corresponding author, \texttt{dy2507@columbia.edu.}}
\quad
\textsf{Yuan Gao}\thanks{Microsoft.}
\quad
\textsf{Tianyi Lin}\thanks{Columbia University.}
\quad
\textsf{Christian Kroer$^\ddagger$}
}
\date{}
\begin{document}
\maketitle

\begin{abstract}
We introduce, to our knowledge, the first direct second-order method for computing Nash equilibria in two-player zero-sum games. 
To do so, we construct a Douglas-Rachford-style splitting formulation, which we then solve with a semi-smooth Newton (SSN) method. We show that our algorithm enjoys local superlinear convergence. In order to augment the fast local behavior of our SSN method with global efficiency guarantees, we develop a hybrid method that combines our SSN method with the state-of-the-art first-order method for game solving, Predictive Regret Matching$^+$ (PRM$^+$). Our hybrid algorithm leverages the global progress provided by PRM$^+$, while achieving a local superlinear convergence rate once it switches to SSN near a Nash equilibrium. Numerical experiments on matrix games demonstrate order-of-magnitude speedups over PRM$^+$ for high-precision solutions.
\end{abstract}

\newpage

\section{Introduction to Bilinear Saddle Point Problem}
The Nash equilibrium is arguably the most foundational solution concept of game theory, representing a strategy profile in which no player can unilaterally deviate to improve their outcome. Approximating Nash equilibria of large-scale game models is the foundation of many human and superhuman-level AIs for games such as poker~\citep{Bowling-2015-Heads, Brown-2018-Superhuman, Brown-2019-Superhuman}, Stratego~\citep{Perolat-2022-Mastering}, and Diplomacy~\citep{Meta-2022-Human}.

In the two-player zero-sum game setting, one player's gain is the other player's loss, so the game is completely described by a single payoff (utility) matrix $A\in \mathbb R^{n\times m}$, where entry $a_{ij}$ specifies the utility of the column player (and the negative utility of the row player) when the actions $i$ and $j$ are played. 
A (mixed-strategy) Nash equilibrium is then a pair of probability distributions $(x^*, y^*)$ over the rows and columns of $A$, such that $x^*$ and $y^*$ minimize and maximize their expected payoffs, respectively, given the strategy of the other player.

By von Neumann's minimax theorem, finding a Nash equilibrium in a two-player zero-sum game can be formulated as solving the bilinear saddle point problem (BLSP)
\begin{equation}\label{general_blsp}
    \min\limits_{x \in X} \max\limits_{y \in Y} x^{\top} A y
\end{equation}
where $A \in \mathbb{R}^{n \times m}$ is the payoff matrix and $X \subseteq \mathbb{R}^{n}$, $Y \subseteq \mathbb{R}^{m}$ are convex, compact strategy sets for the min and max players.

In both normal-form and extensive-form games, $X$ and $Y$ are polytopes, and so the optimization problem in \Cref{general_blsp} can be solved exactly by linear programming~\citep{Von-1996-Efficient}.
While this approach works in principle, it is impractical for large-scale games; even with state-of-the-art commercial solvers, the LP reformulation inflates the problem size and removes exploitable structures in the payoff and constraint matrices, making exact solutions computationally prohibitive.
Instead, first-order methods (FOMs) and regret minimization approaches have been widely used for large-scale game solving, where they have been highly successful in scaling to and solving large games arising from real-world applications. FOMs such as the \textit{Extragradient Method} \citep{Korpelevich-1976-Extragradient} and \textit{Mirror Prox} \citep{Nemirovski-2004-Prox} converge to a Nash equilibrium at a rate of $O(1/T)$, where $T$ is the number of iterations. Regret minimization approaches include \textit{Regret Matching$^+$} (RM$^+$) and its predictive variant \textit{Predictive Regret Matching}$^+$ (PRM$^+$)~\citep{Zinkevich-2007-Regret}; these approaches use self-play to iteratively update each player's strategy proportional to their positive cumulative regrets, and converge to a Nash equilibrium at a rate of $O(1/\sqrt{T})$, though their empirical performance is often much stronger. In fact, \textit{Counterfactual Regret Minimization}$^+$ (CFR$^+$)~\citep{Tammelin-2014-Solving} and its predictive variant PCFR$^+$~\citep{Farina-2021-Faster}, which extend the RM$^+$ framework from normal-form to extensive-form games, are some of the fastest game-solving algorithms numerically, and \cfrp{} served as the foundation for recent breakthroughs in superhuman poker AIs~\citep{Bowling-2015-Heads, Brown-2018-Superhuman, Brown-2019-Superhuman}.

While FOMs serve as the standard for computing approximate Nash equilibrium in two-player zero-sum games, their sublinear convergence guarantees make high-precision solutions expensive to compute. In convex minimization, high-precision solutions can be computed with second-order methods, which exploit curvature information for rapid local convergence. In two-player zero-sum game solving, however, there are no known ``direct'' methods  for computing Nash equilibria using second-order methods.
Classical second-order methods in optimization, such as interior-point algorithms, exploit curvature information through barrier formulations to achieve polynomial-time complexity for constrained problems. These methods can be applied by converting the BLSP into a linear program and applying the interior-point methods to that formulation. However, this throws away much of the structure of the BLSP and forces the use of generic methodology.
Recently, there have been advances in applying second-order methods directly to saddle point problems, such as the $J$-symmetric quasi-Newton framework of \citet{Asl-2024-J}, but these advances are restricted to the unconstrained setting. In contrast, our method offers a direct means of incorporating curvature information into bilinear saddle point problems, without resorting to barrier reformulations or unconstrained analogues, while simultaneously leveraging the strengths of the state-of-the-art FOMs in game solving. As a first step to leverage the power of the curvature information inherent in the strategy spaces, we present a hybrid method combining a state-of-the-art FOM with a semi-smooth Newton method applied to an operator stemming from the Douglas-Rachford splitting algorithm (DRS). Our main results are as follows.

\textbf{Principled Use of Curvature Information.} To the best of our knowledge, we develop the first direct second-order method for solving bilinear saddle point problems, based on a Douglas–Rachford splitting formulation solved via a regularized semi-smooth Newton method. In contrast to other saddle point residual operators, our residual operator derived from DRS is monotone and piecewise affine, making it particularly well-suited for SSN methods. 
We establish that the regularized semi-smooth Newton method applied to our residual operator achieves local quadratic convergence. Our analysis replaces the classical BD-regularity assumption~\citep{Facchinei-2003-Finite} with a local error bound condition and a stability property implied by monotonicity and Lipschitz continuity. In doing so, we extend prior metric-subregularity results for matrix games~\citep{Tseng-1995-Linear, Gilpin-2012-First, Wei-2021-Linear} to our residual operator using tools from convex analysis.

\textbf{First and Second-order Hybrid Method.}
While our second-order method has local superlinear convergence, it may be slow initially. To ameliorate this fact, we develop a principled approach for switching between iterates generated by a first-order method, which generally lie in the feasible set, and iterates for our SSN method, which generally move outside the feasible set. By systematically leveraging the progress of PRM$^+$ to warm-start our regularized semi-smooth Newton method, our approach provides the first framework combining first-order progress with second-order guarantees, ensuring efficient global progress together with local superlinear convergence as our method transitions to SSN steps. A key component of our design is a lifting framework that allows for an effective transition: given a strong FOM iterate, the corresponding lifted SSN iterate has a small residual norm, placing it squarely within the regime of local superlinear convergence. 

\textbf{Numerical Experiments.} 
We conduct extensive numerical experiments on a diverse set of matrix games. For finding low-precision regimes, our hybrid method can be slower than pure FOMs due to a preprocessing inversion step and the overhead of solving Newton systems. However, once the SSN iterates enter a neighborhood of a Nash equilibrium in the lifted space, our hybrid method effectively exploits second-order information, leading to significant speedup over state-of-the-art FOMs. The results validate both the theoretical guarantees and the practical efficiency of our approach for high-precision equilibrium computation.

\section{Preliminaries and Technical Background}\label{sec:prelims}
We provide the formal definitions for two-player zero-sum games considered in this paper, and the
schemes of operator splitting methods and semi-smooth Newton methods. 

\subsection{Two-Player Zero-Sum Games}
We denote by $\mathcal{S} = \Delta^n \times \Delta^m$ the set of feasible strategy profiles, where $\Delta^n$ and $\Delta^m$ are the simplices representing the mixed strategies of the two players in an $n \times m$ zero-sum matrix game with payoff matrix $A$. A generic strategy profile will be written as $z := (x, y)$, with $x \in \Delta^n$ representing the minimizer’s strategy and $y \in \Delta^m$ the maximizer’s strategy. Within this framework, the duality gap serves as a natural measure of how far a strategy profile is from the set of Nash equilibria: it quantifies the incentive for either player to unilaterally deviate from their current strategies.
\begin{definition}
For a strategy profile $\hat{z} = (\hat{x}, \hat{y})$, the duality gap is defined as
\begin{equation}\label{def:gap}
\textnormal{gap}(\hat{z}) = \max_{y \in \Delta^m} \hat{x}^\top Ay - \min_{x \in \Delta^n} x^\top A\hat{y}.
\end{equation}
By definition, $\hat{z}$ is a Nash equilibrium if and only if $\textnormal{gap}(\hat{z})=0$. 
\end{definition}
The maximizer evaluates the highest payoff achievable by deviating from $\hat{y}$, while the minimizer evaluates the lowest payoff achievable by deviating from $\hat{x}$, demonstrating that $\textnormal{gap}(\hat{z}) \geq 0$ always holds. In practice, iterative algorithms for computing Nash equilibria seldom converge exactly to an equilibrium, but rather produce approximate solutions. This motivates the notion of an $\epsilon$-Nash equilibrium, where the incentives to deviate are bounded by a small tolerance $\epsilon > 0$.
\begin{definition}\label{def:eps_NE}
A strategy profile $\hat{z} = (\hat{x}, \hat{y})$ is an $\epsilon$-Nash equilibrium if the following condition holds:
\begin{equation*}
\max_{y \in \Delta^m} \hat{x}^\top Ay - \epsilon \leq \hat{x}^\top A\hat{y} \leq \min_{x \in \Delta^n} x^\top A\hat{y} + \epsilon.
\end{equation*}
\end{definition}
\begin{remark}
The concept of an $\epsilon$-Nash equilibrium can be equivalently expressed in terms of the duality gap~\citep{Orabona-2019-Modern}. Specifically, if $\hat{z}$ is an $\epsilon$-Nash equilibrium, then its duality gap satisfies $\textnormal{gap}(\hat{z}) \leq 2\epsilon$. Conversely, if a strategy profile $\hat{z}$ achieves a duality gap below $\epsilon$, then $\hat{z}$ itself constitutes an $\epsilon$-Nash equilibrium. This tight connection establishes the duality gap as a natural and widely used performance metric for analyzing iterative algorithms.
\end{remark}

\subsection{Operator Splitting Methods}
As an alternative to the duality gap, Nash equilibria can also be characterized through residual operators. 
A natural choice is the classical saddle-point normal map $z \mapsto z - \Pi_{\mathcal{S}}\left(z - \eta \begin{bmatrix} 0 & A \\ -A^\top & 0 \end{bmatrix} z\right)$, but this map is not monotone in the two-player zero-sum game setting and therefore lacks a desired structural property for SSN methods. Instead, we base our construction on the optimality condition for two-player zero-sum games: a strategy profile $\hat{z} = (\hat{x}, \hat{y}) \in \mathcal{S}$ is a Nash equilibrium if and only if $0 \in F(\hat{z}) + N_{\mathcal{S}}(\hat{z})$ where 
\begin{equation*}
F(\hat{z}) = \begin{bmatrix} 0 & A \\ -A^\top & 0 \end{bmatrix}\hat{z}, \quad N_{\mathcal{S}}(\hat{z}) = \{v \mid v^\top (z-\hat{z}) \leq 0 \quad \forall z \in \mathcal{S}\}. 
\end{equation*}
Thus, finding a Nash equilibrium reduces to finding a root of the sum of two structured operators. From standard convex analysis, both $F(\cdot)$ and $N_{\mathcal{S}}(\cdot)$ are maximally monotone, which guarantees that their resolvents $J_F^\gamma(\cdot) = (I + \gamma F)^{-1}(\cdot)$ and $J_{N_{\mathcal{S}}}^\gamma = (I + \gamma N_{\mathcal{S}})^{-1}(\cdot)$ are single-valued and firmly nonexpansive. This fact underpins the use of operator-splitting methods in solving two-player zero-sum games; for an overview of such methods, see the monographs~\citet{Rockafellar-1998-Variational, Bauschke-2017-Convex, Ryu-2022-Large}. 

In particular, we focus on the Douglas-Rachford splitting (DRS) method~\citep{Lions-1979-Splitting, Eckstein-1992-Douglas}, which applies to sums of two maximal monotone operators. With step size $\gamma > 0$, the DRS operator applied to our setting is given by
\begin{equation}\label{T_DRS_form}
T_{\DRS}^{\gamma} = \mathrm{Id} - J_{N_{\mathcal{S}}}^\gamma + J_F^\gamma \circ (2J_{N_{\mathcal{S}}}^\gamma - \mathrm{Id}), 
\end{equation}
where $\mathrm{Id}$ denotes the identity operator and $\circ$ represents composition. Because $F(\cdot)$ is linear and $N_{\mathcal{S}}(\cdot)$ is a normal cone operator, their resolvents simplify to
\begin{equation*}
J_F^\gamma(z) = \begin{bmatrix} I & \gamma A \\ - \gamma A^\top & I \end{bmatrix}^{-1}z, \quad J_{N_{\mathcal{S}}}^{\gamma}(z) = \Pi_{\mathcal{S}}(z). 
\end{equation*}
Thus, $J_F^\gamma$ is a linear operator and $J_{N_{\mathcal{S}}}^\gamma$ is simply the projection onto $\mathcal{S}$. As a consequence, one DRS iteration takes the following form of 
\begin{equation}\label{def:drs}
z_{k+1} = T_{\DRS}^{\gamma}(z_k) = z_k - \Pi_{\mathcal{S}}(z_k) + \begin{bmatrix} I & \gamma A \\ - \gamma A^\top & I \end{bmatrix}^{-1}(2\Pi_{\mathcal{S}}(z_k) - z_k).
\end{equation}
We define the residual operator $R_{\mathrm{DRS}}^{\gamma} := \mathrm{Id} - T_{\mathrm{DRS}}^{\gamma}$, noting that $R_{\mathrm{DRS}}^{\gamma}(\hat{z})=0$ implies $\Pi_{\mathcal{S}}(\hat{z})$ is a Nash equilibrium (see Section~\ref{subsec:link}). The following lemma summarizes one of its key properties. 
\begin{lemma}\label{lemma:monotone_Lipschitz}
The operator $R_{\DRS}^{\gamma}(\cdot)$ is 
\begin{itemize}
    \item[(i)] monotone: $(z_1-z_2)^\top (R_{\DRS}^{\gamma}(z_1) - R_{\DRS}^{\gamma}(z_2)) \geq 0$
    \item[(ii)] $1$-Lipschitz: $\|R_{\DRS}^{\gamma}(z_1) - R_{\DRS}^{\gamma}(z_2)\| \leq \|z_1-z_2\|$ 
\end{itemize}
for all $z_1, z_2$.
\end{lemma}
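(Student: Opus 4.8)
The plan is to prove the stronger statement that $R_{\DRS}^{\gamma}$ is \emph{firmly nonexpansive} (FNE), from which both claimed properties follow at once. Recall the three equivalent characterizations I will rely on: an operator $T$ is FNE if $\|T(z_1)-T(z_2)\|^2 \le \langle z_1-z_2,\,T(z_1)-T(z_2)\rangle$ for all $z_1,z_2$; equivalently if $T=\tfrac12(\mathrm{Id}+N)$ for some nonexpansive (i.e.\ $1$-Lipschitz) $N$; equivalently if its complement $\mathrm{Id}-T$ is itself FNE. Granting that $R_{\DRS}^{\gamma}$ is FNE, property (i) is immediate since $\langle z_1-z_2,\,R_{\DRS}^{\gamma}(z_1)-R_{\DRS}^{\gamma}(z_2)\rangle \ge \|R_{\DRS}^{\gamma}(z_1)-R_{\DRS}^{\gamma}(z_2)\|^2 \ge 0$, and property (ii) follows by bounding the same inner product above via Cauchy--Schwarz by $\|z_1-z_2\|\,\|R_{\DRS}^{\gamma}(z_1)-R_{\DRS}^{\gamma}(z_2)\|$ and dividing through.

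The crux is therefore to show that $T_{\DRS}^{\gamma}$ is FNE, because $R_{\DRS}^{\gamma}=\mathrm{Id}-T_{\DRS}^{\gamma}$ is FNE precisely when $T_{\DRS}^{\gamma}$ is. First I would introduce the reflected resolvents $R_F^{\gamma}:=2J_F^{\gamma}-\mathrm{Id}$ and $R_{N_{\mathcal{S}}}^{\gamma}:=2J_{N_{\mathcal{S}}}^{\gamma}-\mathrm{Id}$ and verify the standard algebraic identity $T_{\DRS}^{\gamma}=\tfrac12\big(\mathrm{Id}+R_F^{\gamma}\circ R_{N_{\mathcal{S}}}^{\gamma}\big)$ by expanding the right-hand side and matching it term by term against the definition in \eqref{T_DRS_form}; this is a short computation.

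Next I would invoke the facts already recorded in the excerpt: since $F$ and $N_{\mathcal{S}}$ are maximally monotone, the resolvents $J_F^{\gamma}$ and $J_{N_{\mathcal{S}}}^{\gamma}$ are firmly nonexpansive, and hence each reflected resolvent $R_F^{\gamma}$ and $R_{N_{\mathcal{S}}}^{\gamma}$ is nonexpansive. Because a composition of nonexpansive maps is nonexpansive, $R_F^{\gamma}\circ R_{N_{\mathcal{S}}}^{\gamma}$ is nonexpansive, and the $\tfrac12(\mathrm{Id}+N)$ characterization then gives that $T_{\DRS}^{\gamma}$ is FNE. This makes $R_{\DRS}^{\gamma}$ FNE and completes the proof.

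Since the argument is purely structural and reuses only classical operator-splitting facts, I do not expect a substantive obstacle. The only points demanding care are confirming the reflected-resolvent rewriting of \eqref{T_DRS_form} (so that $R_F^{\gamma}$ and $R_{N_{\mathcal{S}}}^{\gamma}$ land in the correct slots) and being explicit that the nonexpansiveness of $R_F^{\gamma}$ ultimately rests on the monotonicity of $F$, which here holds because its matrix is skew-symmetric and thus satisfies $z^\top F z = 0$.
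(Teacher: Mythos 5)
Your proposal is correct and follows essentially the same route as the paper: both arguments reduce the claim to the firm nonexpansiveness of $T_{\DRS}^{\gamma}$, deduce that $R_{\DRS}^{\gamma}=\mathrm{Id}-T_{\DRS}^{\gamma}$ is firmly nonexpansive, and then read off monotonicity and $1$-Lipschitzness from the firm-nonexpansiveness inequality together with Cauchy--Schwarz. The only difference is cosmetic: the paper cites the firm nonexpansiveness of $T_{\DRS}^{\gamma}$ from the operator-splitting literature and transfers it to $R_{\DRS}^{\gamma}$ by an explicit norm expansion, whereas you re-derive it via the reflected-resolvent identity $T_{\DRS}^{\gamma}=\tfrac12(\mathrm{Id}+(2J_F^{\gamma}-\mathrm{Id})\circ(2J_{N_{\mathcal{S}}}^{\gamma}-\mathrm{Id}))$ and invoke the standard equivalence that $T$ is firmly nonexpansive iff $\mathrm{Id}-T$ is; both steps are valid.
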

Our approach is to solve the root-finding problem for this residual operator using semi-smooth Newton (SSN) methods~\citep{Qi-1993-Nonsmooth, Pang-1993-Nonsmooth, Martinez-1995-Inexact}. This shares a similar spirit with prior SSN applications in composite convex optimization~\citep{Xiao-2018-Regularized} and kernel-based optimal transport~\citep{Lin-2024-Specialized}. 

However, the unique structure of two-player zero-sum games requires new algorithmic strategies, including nontrivial switching between SSN and PRM$^+$, making the practical implementation considerably more delicate.

\subsection{Semi-smooth Newton Methods}
Semi-smooth Newton (SSN) methods generalize the classical Newton method to nonsmooth equations by replacing the Jacobian with a generalized Jacobian. Since $R_{\DRS}^{\gamma}(\cdot)$ is Lipschitz continuous, Rademacher's theorem implies that it is differentiable almost everywhere. This motivates the use of the notion of a generalized Jacobian~\citep{Clarke-1990-Optimization}.

\begin{definition}\label{def:generalized_subdiff}
Let $F$ be a Lipschitz mapping and $D_F$ the set of differentiable points of $F$. The $B$-subdifferential at $z$ is $\partial_B F(z) = \{\lim_{k \rightarrow +\infty} \nabla F(z^k) \mid z^k \in D_F, z^k \rightarrow z\}$ and the generalized Jacobian is defined as $\partial F(z) = \textnormal{conv}(\partial_B F(z))$, where $\textnormal{conv}$ denotes the convex hull.
\end{definition}
A second key notion is semi-smoothness, introduced by~\citet{Mifflin-1977-Semismooth} for real-valued functions and extended to vector-valued mappings by~\citet{Qi-1993-Nonsmooth}.
\begin{definition}\label{def:semismooth}
Let $F$ be a Lipschitz mapping. Then $F$ is (strongly) semi-smooth at $z$ if (i) $F$ is directionally differentiable at $z$, and (ii) for any $G \in \partial F(z+\Delta z)$, as $\Delta z \to 0$,
\begin{equation*}
\begin{array}{rl}
\textnormal{(semismooth)} & \tfrac{\|F(z + \Delta z) - F(z) - G\Delta z\|}{\|\Delta z\|} \rightarrow 0, \\ \textnormal{(strongly semismooth)} & \tfrac{\|F(z + \Delta z) - F(z) - G\Delta z\|}{\|\Delta z\|^2} \leq C  \end{array}
\end{equation*}
for some constant $C > 0$.
\end{definition}
The following lemma shows that $R_{\DRS}^{\gamma}(\cdot)$ is strongly semi-smooth and guarantees that the SSN method is suitable to solve $R_{\DRS}^{\gamma}(z) = 0$. 
\begin{lemma}\label{lemma:semismooth}
The operator $R_{\DRS}^{\gamma}(\cdot)$ is strongly semi-smooth.
\end{lemma}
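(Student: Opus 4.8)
The plan is to decompose $R_{\DRS}^{\gamma}$ into elementary building blocks, each of which is strongly semi-smooth, and then invoke the fact that the class of strongly semi-smooth maps is closed under linear combination and composition. Starting from the definition of $T_{\DRS}^{\gamma}$ in \eqref{T_DRS_form} and using $J_{N_{\mathcal{S}}}^{\gamma} = \Pi_{\mathcal{S}}$, I would first rewrite the residual operator as
\begin{equation*}
R_{\DRS}^{\gamma} = \mathrm{Id} - T_{\DRS}^{\gamma} = \Pi_{\mathcal{S}} - J_F^{\gamma} \circ \left(2\Pi_{\mathcal{S}} - \mathrm{Id}\right).
\end{equation*}
In this form $R_{\DRS}^{\gamma}$ is assembled from just three ingredients: the identity $\mathrm{Id}$, the linear resolvent $J_F^{\gamma}$, and the projection $\Pi_{\mathcal{S}}$. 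It therefore suffices to show that each ingredient is strongly semi-smooth and that the operations combining them preserve this property.

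The key step is the projection onto $\mathcal{S}$. Since $\mathcal{S} = \Delta^n \times \Delta^m$ is a polyhedral convex set, the Euclidean projection $\Pi_{\mathcal{S}}$ is a piecewise affine map: the KKT/normal-cone optimality conditions partition $\mathbb{R}^{n+m}$ into finitely many polyhedral cells, indexed by the active face of $\mathcal{S}$, on each of which $\Pi_{\mathcal{S}}$ coincides with an affine selection. Because $\mathcal{S}$ is a product, one also has $\Pi_{\mathcal{S}}(x,y) = \left(\Pi_{\Delta^n}(x), \Pi_{\Delta^m}(y)\right)$, and the explicit sorting-based formula for simplex projection exhibits each block as piecewise affine directly. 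I would then invoke the classical result that every piecewise affine map is strongly semi-smooth (see \citet{Facchinei-2003-Finite}), which gives strong semi-smoothness of $\Pi_{\mathcal{S}}$, and hence of the reflection operator $2\Pi_{\mathcal{S}} - \mathrm{Id}$, being a linear combination of the strongly semi-smooth maps $\Pi_{\mathcal{S}}$ and $\mathrm{Id}$.

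To assemble the result, I would observe that $J_F^{\gamma}$ is the linear map given by multiplication by the fixed invertible matrix $\begin{bmatrix} I & \gamma A \\ -\gamma A^{\top} & I \end{bmatrix}^{-1}$, hence $C^{\infty}$ and in particular strongly semi-smooth. Strong semi-smoothness is stable under composition: if $g$ is strongly semi-smooth at $h(z)$ and $h$ is strongly semi-smooth at $z$, then $g \circ h$ is strongly semi-smooth at $z$, the chain rule being applicable because all maps involved are locally Lipschitz and directionally differentiable. Applying this with $g = J_F^{\gamma}$ and $h = 2\Pi_{\mathcal{S}} - \mathrm{Id}$ shows $J_F^{\gamma} \circ (2\Pi_{\mathcal{S}} - \mathrm{Id})$ is strongly semi-smooth, and finally $R_{\DRS}^{\gamma}$, as the difference of this map and $\Pi_{\mathcal{S}}$, is strongly semi-smooth by closure under linear combination.

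The substantive content lies entirely in the projection step: once $\Pi_{\mathcal{S}}$ is known to be piecewise affine, the rest is routine calculus of semi-smooth maps. The main point requiring care is therefore the justification that projection onto the product simplex is genuinely piecewise affine with finitely many affine pieces, which I would handle either via the general polyhedrality of the normal-cone graph or via the explicit coordinatewise simplex-projection formula. I expect no real difficulty beyond stating the piecewise-affine-implies-strongly-semi-smooth lemma precisely and confirming that the directional-differentiability hypothesis of the composition rule holds, which it does automatically for piecewise affine and linear maps.
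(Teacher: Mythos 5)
Your argument is correct and follows essentially the same route as the paper: both reduce the claim to the fact that $\Pi_{\mathcal{S}}$ is a piecewise affine (polyhedral) projection, hence strongly semi-smooth, and then conclude via closure of strong semi-smoothness under linear combination and composition with the linear map $J_F^{\gamma}$. Your write-up simply fills in more of the routine details (the explicit decomposition and the chain-rule hypotheses) than the paper's brief citation-based proof.
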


This property justifies the use of SSN to solve $R_{\DRS}^{\gamma}(z) = 0$. In particular, the regularized SSN method proceeds as follows: given $z_k$, compute $z_{k+1} = z_k + \Delta z_k$, where $\Delta z_k$ solves
\begin{equation}\label{drssn_update_rule}
(G_k + \mu_k I)\Delta z_k = -r_k,
\end{equation}
with $G_k \in \partial R_{\DRS}^{\gamma}(z_k)$, $r_k = R_{\DRS}^{\gamma}(z_k)$, and $I$ the identity matrix. The regularization parameter is chosen adaptively as $\mu_k = \theta_k |r_k|$ for $\theta_k > 0$. This stabilizes the method in practice: as $r_k \to 0$, the algorithm gradually exploits curvature information from $G_k$ while maintaining invertibility of $G_k + \mu_k I$.

If $R$ is continuously differentiable and $\theta_k = 0$, this reduces to the Newton method, which exhibits local quadratic convergence. Although regularized SSN methods may diverge in general~\citep{Kummer-1988-Newton}, local superlinear convergence is guaranteed under strong semi-smoothness and local error bound conditions~\citep{Zhou-2005-Superlinear}. 

\section{Main Results}
In this section, we first establish a formal connection between Nash equilibria and lifted fixed points, which serves as the theoretical foundation of our approach. Second, building on this insight, we introduce a new game-solving algorithm that integrates PRM$^+$ with a regularized semi-smooth Newton method. Finally, we provide a convergence analysis of the resulting procedure.

\subsection{Linking Nash Equilibria to Lifted Fixed Points}\label{subsec:link}
While Nash equilibria are not themselves fixed points of $T_{\DRS}^{\gamma}$, they can be linked through a simple lifting, as the following theorem shows.

\begin{theorem}\label{thm:relationship}
The following relationships between fixed points of $T_{\DRS}^\gamma$ and Nash equilibria hold:
\begin{itemize}
\item[(i)] If $z \in \mathbb{R}^{m+n}$ is a fixed point of $T_{\DRS}^\gamma$, then $\hat{z} = \Pi_{\mathcal{S}}(z)$ is a Nash equilibrium. 
\item[(ii)] If $\hat{z} \in \mathcal{S}$ is a Nash equilibrium, then $z = \hat{z} - \gamma F(\hat{z})$ is a fixed point of $T_{\DRS}^\gamma$.     
\end{itemize}
\end{theorem}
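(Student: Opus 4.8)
The plan is to treat both implications as consequences of the standard Douglas--Rachford fixed-point identity, exploiting the two resolvent characterizations already recorded in the preliminaries: that $\hat{z} = J_{N_{\mathcal{S}}}^\gamma(z) = \Pi_{\mathcal{S}}(z)$ is equivalent to the inclusion $z - \hat{z} \in \gamma N_{\mathcal{S}}(\hat{z})$, and that $w = J_F^\gamma(u)$ is equivalent to $u = w + \gamma F(w)$, the latter being a genuine equivalence because $F$ is linear and $I + \gamma F$ is invertible. The first move, common to both parts, is to simplify the fixed-point equation $T_{\DRS}^\gamma(z) = z$. Substituting the definition in \Cref{T_DRS_form} and cancelling the $\mathrm{Id}$ term, this collapses to $J_{N_{\mathcal{S}}}^\gamma(z) = J_F^\gamma\bigl(2 J_{N_{\mathcal{S}}}^\gamma(z) - z\bigr)$, i.e., writing $\hat{z} := \Pi_{\mathcal{S}}(z)$, to $\hat{z} = J_F^\gamma(2\hat{z} - z)$.

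For part (i) I would start from this reduced equation. Applying the $J_F^\gamma$ characterization to $\hat{z} = J_F^\gamma(2\hat{z} - z)$ gives $2\hat{z} - z = \hat{z} + \gamma F(\hat{z})$, hence $z - \hat{z} = -\gamma F(\hat{z})$. Separately, because $\hat{z} = \Pi_{\mathcal{S}}(z)$ is the resolvent of the normal cone, we have $z - \hat{z} \in \gamma N_{\mathcal{S}}(\hat{z})$. Combining the two yields $-\gamma F(\hat{z}) \in \gamma N_{\mathcal{S}}(\hat{z})$, and dividing by $\gamma > 0$ gives $0 \in F(\hat{z}) + N_{\mathcal{S}}(\hat{z})$. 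Since $\hat{z} = \Pi_{\mathcal{S}}(z) \in \mathcal{S}$, this is exactly the optimality condition recalled in \Cref{sec:prelims}, so $\hat{z}$ is a Nash equilibrium.

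For part (ii) I would run the same identities in reverse. Given a Nash equilibrium $\hat{z}$, the condition $0 \in F(\hat{z}) + N_{\mathcal{S}}(\hat{z})$ means $-F(\hat{z}) \in N_{\mathcal{S}}(\hat{z})$, so setting $z := \hat{z} - \gamma F(\hat{z})$ gives $z - \hat{z} = -\gamma F(\hat{z}) \in \gamma N_{\mathcal{S}}(\hat{z})$; the projection characterization then forces $\Pi_{\mathcal{S}}(z) = \hat{z}$. It remains to verify the fixed-point equation directly: the reflected point is $2\Pi_{\mathcal{S}}(z) - z = 2\hat{z} - z = \hat{z} + \gamma F(\hat{z}) = (I + \gamma F)\hat{z}$, so $J_F^\gamma(2\hat{z} - z) = (I + \gamma F)^{-1}(I + \gamma F)\hat{z} = \hat{z}$, and therefore $T_{\DRS}^\gamma(z) = z - \hat{z} + \hat{z} = z$.

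I do not anticipate a genuine analytic obstacle, since maximal monotonicity of both $F$ and $N_{\mathcal{S}}$ (noted in the preliminaries) places us squarely within the standard Douglas--Rachford theory. The only point requiring care is the bookkeeping around the set-valued operator $N_{\mathcal{S}}$: one must consistently pass between the projection $\Pi_{\mathcal{S}}$ and the normal-cone inclusion, and keep track of the fact that $z$ itself need not lie in $\mathcal{S}$ (only $\hat{z} = \Pi_{\mathcal{S}}(z)$ does), so that the two parts describe a \emph{lifting} rather than an identification. The $J_F^\gamma$ steps, by contrast, are pure linear algebra enabled by the invertibility of $I + \gamma F$.
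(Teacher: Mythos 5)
Your proposal is correct and follows essentially the same route as the paper: both parts reduce the fixed-point equation to $\hat{z} = J_F^\gamma(2\hat{z}-z)$, translate it via the invertibility of $I+\gamma F$ into $z-\hat{z} = -\gamma F(\hat{z})$, and pair this with the normal-cone characterization of the projection to obtain (or, in reverse, verify) the optimality condition $0 \in F(\hat{z}) + N_{\mathcal{S}}(\hat{z})$. The paper's argument for (ii) is just a slightly more explicit direct computation of $T_{\DRS}^\gamma(z)$ using $u := -F(\hat{z})$, but there is no substantive difference.
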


\begin{remark}
A Nash equilibrium is not necessarily a fixed point of $T_{\DRS}^\gamma$. Indeed, since $\Pi_{\mathcal{S}}(\hat{z}) = \hat{z}$, we have 
\begin{equation*}
T_{\DRS}^{\gamma}(\hat{z}) = \hat{z} - \Pi_{\mathcal{S}}(\hat{z}) + \begin{bmatrix} I & \gamma A \\ - \gamma A^\top & I \end{bmatrix}^{-1}(2\Pi_{\mathcal{S}}(\hat{z}) - \hat{z}) = \begin{bmatrix} I & \gamma A \\ - \gamma A^\top & I \end{bmatrix}^{-1}\hat{z}. 
\end{equation*}
Thus, $\hat{z}$ is a fixed point of $T_{\DRS}^{\gamma}$ if and only if $A = 0$, which does not hold in general. The role of the lifted point $z = \hat{z} - \gamma F(\hat{z})$ is to displace $\hat{z}$ along its normal cone direction, which ensures the fixed-point relation.
\end{remark}

To implement the SSN method, we require an efficient way to obtain one element in $\partial R_{\DRS}^{\gamma}$. By definition, we have
\begin{equation*}
R_{\DRS}^{\gamma} = J_{N_{\mathcal{S}}}^\gamma - J_F^\gamma \circ (2J_{N_{\mathcal{S}}}^\gamma - \mathrm{Id}), 
\end{equation*}
where 
\begin{equation*}
J_F^\gamma(z) = \begin{bmatrix} I & \gamma A \\ - \gamma A^\top & I \end{bmatrix}^{-1}z, \quad J_{N_{\mathcal{S}}}^{\gamma}(z) = \Pi_{\mathcal{S}}(z). 
\end{equation*}
The generalized Jacobian of $R_{\DRS}^{\gamma}(\cdot)$ at $z$ is given by 
\begin{equation*}
\partial R_{\DRS}^{\gamma}(z) = \partial J_{N_{\mathcal{S}}}^\gamma(z) - (\partial J_F^\gamma) \circ (2 \partial J_{N_{\mathcal{S}}}^\gamma(z) - z) \cdot (2\partial J_{N_{\mathcal{S}}}^\gamma(z) - I), 
\end{equation*}
where 
\begin{equation*}
\partial J_F^\gamma(z) = \begin{bmatrix} I & \gamma A \\ - \gamma A^\top & I \end{bmatrix}^{-1}, \quad \partial J_{N_{\mathcal{S}}} = \partial \Pi_{\mathcal{S}}(z) = \left\{\begin{bmatrix} G_x & 0 \\ 0 & G_y \end{bmatrix} : G_x \in \partial \Pi_{\Delta^n}(x), G_y \in \partial \Pi_{\Delta^m}(y) \right\}. 
\end{equation*}
Thus, finding an element in $\partial R_{\DRS}^{\gamma}$ reduces to finding an element of the generalized Jacobian of the simplex projection operator.
\begin{theorem}\label{thm:generalized_jacobian}
Let $p \in \mathbb{R}^d$ and let $x = \Pi_{\Delta^d}(p)$ denote the projection of $p$ onto the simplex $\Delta^d$. Define the active set $\mathcal{A} := \{ i \in [d] : x_i > 0 \}$. 
Then
\begin{equation*}
G = \mathrm{diag}(a) - \tfrac{1}{\|a\|_1} \, a a^\top,
\end{equation*}
where $a \in \mathbb{R}^d$ is the indicator vector of the active set $\mathcal{A}$, satisfies $G \in \partial \Pi_{\Delta^d}(p)$.
\end{theorem}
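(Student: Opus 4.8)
The plan is to compute an element of the $B$-subdifferential $\partial_B \Pi_{\Delta^d}(p)$ directly by perturbing $p$ slightly so that the projection lands at a differentiable point whose Jacobian limits to the claimed $G$. Recall that the simplex projection solves $\min_{x \in \Delta^d} \tfrac{1}{2}\|x - p\|^2$, and its solution has the closed form $x_i = \max\{p_i - \tau, 0\}$ where $\tau$ is the unique threshold chosen so that $\sum_i \max\{p_i - \tau, 0\} = 1$. The active set is exactly $\mathcal{A} = \{i : p_i > \tau\} = \{i : x_i > 0\}$, and on the region where the active set is locally constant, the map $p \mapsto x$ is smooth. So the strategy is: first identify the smooth pieces of $\Pi_{\Delta^d}$ via the active set, then differentiate the closed form on a piece whose active set equals $\mathcal{A}$, and finally argue this Jacobian is attained as a limit (hence lies in $\partial_B$, and a fortiori in $\partial$).

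**First I would** restrict attention to points $p$ where the active set is locally stable, i.e., where $p_i \neq \tau$ for all $i$ (no index sits exactly at the threshold). At such a point the projection is differentiable: on the active coordinates we have $x_i = p_i - \tau$ and on the inactive coordinates $x_i = 0$. Differentiating the normalization constraint $\sum_{i \in \mathcal{A}}(p_i - \tau) = 1$ gives $\sum_{i \in \mathcal{A}} dp_i - |\mathcal{A}|\, d\tau = 0$, so $d\tau = \tfrac{1}{|\mathcal{A}|} \sum_{i \in \mathcal{A}} dp_i$. Substituting back, for $i \in \mathcal{A}$ we get $dx_i = dp_i - d\tau$, and for $i \notin \mathcal{A}$ we get $dx_i = 0$. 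Writing $a$ for the indicator of $\mathcal{A}$, so that $|\mathcal{A}| = \|a\|_1 = a^\top a$, this is exactly the linear map $dx = \mathrm{diag}(a)\,dp - \tfrac{1}{\|a\|_1} a (a^\top dp) = \bigl(\mathrm{diag}(a) - \tfrac{1}{\|a\|_1} a a^\top\bigr) dp$, which is the claimed $G$.

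**Next I would** handle the case of the actual point $p$ in the statement, where some coordinates may sit exactly at the threshold (so that $\Pi_{\Delta^d}$ need not be differentiable at $p$ itself and $\partial_B$ is genuinely the relevant object). Here I take a sequence $p^k \to p$ constructed by perturbing the strictly-active coordinates of $p$ upward by a vanishing amount (and leaving the others fixed), chosen so that for every $k$ the active set of $\Pi_{\Delta^d}(p^k)$ equals precisely $\mathcal{A} = \{i : x_i > 0\}$ and no coordinate lies at the threshold. Each $p^k$ is then a differentiability point with Jacobian equal to the constant matrix $G$ computed above, independent of $k$. Taking the limit, $G = \lim_{k} \nabla \Pi_{\Delta^d}(p^k) \in \partial_B \Pi_{\Delta^d}(p) \subseteq \partial \Pi_{\Delta^d}(p)$, which is what we want.

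**The main obstacle** is the bookkeeping in the construction of the perturbation sequence in the degenerate case: one must verify that a perturbation can be chosen so that the active set of the perturbed projection is exactly $\mathcal{A}$ (neither gaining the threshold-coordinates nor losing any strictly positive coordinate) and that the threshold $\tau^k$ moves continuously so that no index is pinned at the boundary for the whole sequence. The key fact making this clean is that the active set of the projection is \emph{locally lower semicontinuous} around strictly positive coordinates — coordinates with $x_i > 0$ remain active under small perturbations — so only the threshold-coordinates need careful handling, and pushing $p$ strictly inside the appropriate cone resolves the ambiguity. The rest of the argument is the routine differentiation of the closed form, which I would not grind through in detail.
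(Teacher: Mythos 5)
Your proposal is correct and follows essentially the same route as the paper: differentiate the closed-form threshold representation $x_i = \max\{p_i-\tau,\,0\}$ on a piece where the active set is locally constant, obtaining $G = \mathrm{diag}(a) - \tfrac{1}{\|a\|_1}\,aa^\top$. You are in fact more careful than the paper's own proof, which differentiates the closed form directly at $p$ and silently ignores the degenerate case where some coordinate sits exactly at the threshold; your perturbation argument exhibiting $G$ as a limit of Jacobians at nearby differentiability points (so that $G \in \partial_B \Pi_{\Delta^d}(p) \subseteq \partial \Pi_{\Delta^d}(p)$) is exactly what is needed to make the claim rigorous there.
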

Putting these pieces together yields
\begin{equation}\label{def:Jacobian}
\begin{bmatrix} G_x & 0 \\ 0 & G_y \end{bmatrix} - \begin{bmatrix} I & \gamma A \\ - \gamma A^\top & I \end{bmatrix}^{-1}\left(2 \begin{bmatrix} G_x & 0 \\ 0 & G_y \end{bmatrix} - I \right) \in \partial R_{\DRS}^{\gamma}(z). 
\end{equation}
\Cref{thm:relationship} establishes that projecting a fixed point of $T_{\DRS}^{\gamma}$ onto $\mathcal{S}$ yields a Nash equilibrium, and conversely, every Nash equilibrium induces a lifted fixed point. These results demonstrate a one-to-one correspondence between exact fixed points and exact Nash equilibria.

In practice, however, algorithms rarely reach exact solutions; instead, they operate with approximate fixed points and $\epsilon$-Nash equilibria. We therefore extend this correspondence to the approximate setting. In particular, we show how the residual norm of $R_{\DRS}^{\gamma}(\cdot)$ can be related to the duality gap. This link guarantees that progress made by a first-order method (which decreases the duality gap) translates into progress for the semi-smooth Newton method (which works to decrease the residual norm), and vice versa. As such, the connection enables a principled transition between the two regimes within our proposed algorithmic framework.

To formalize this relationship, we require the following condition measure of the payoff matrix $A$, which is used to control the Euclidean distance to the set of equilibria in terms of the duality gap. Being able to relate the duality gap to the iterate distance serves as an essential tool in our analysis.

For the following discussion, we let $Z^*$ denote the set of zeros of $R_{\DRS}^{\gamma}(\cdot)$ and let $\hat{Z}^*$ denote the set of Nash equilibria of $\mathcal{S}$.

\begin{definition}[Condition Measure~\citep{Gilpin-2012-First}]
The condition measure of a matrix $A$, denoted $\delta(A)$, is defined as
\[
    \delta(A) = \sup \left\{\delta : \mathrm{dist}(\hat{z}, \hat{Z}^\star) \leq \frac{\textnormal{gap}(\hat{z})}{\delta} \textnormal{ for all } \hat{z} \in \mathcal{S}\right\}.
\]
\end{definition}
\begin{remark}
In addition,~\citet[Lemma 4]{Gilpin-2012-First} guarantees that there exists $\delta > 0$ such that 
\begin{equation}\label{duality_gap_sharp}
    \mathrm{dist}(\hat{z}, \hat{Z}^\star) \leq \frac{\textnormal{gap}(\hat{z})}{\delta}
\end{equation}
for all $\hat{z} \in \mathcal{S}$. Thus, we have $\delta(A) < +\infty$ for any two-player zero-sum game.
\end{remark}
\begin{theorem}\label{dg_and_norm_relationship}
Let $\hat{z} \in \mathcal{S}$. The following relationships hold:
\begin{itemize}
\item[(i)] If $\|R_{\DRS}^{\gamma}(z)\| \geq \tau \mathrm{dist}(z, Z^\star)$ for some constant $\tau > 0$, we have 
\[
    \textnormal{gap}(\Pi_{\mathcal{S}}(z)) \leq \frac{\sqrt{2}\|A\| \cdot \|R_{\DRS}^{\gamma}(z)\|}{\tau}.
\]
\item[(ii)] For the lifted point $z=\hat{z}-\gamma R_{\DRS}^{\gamma}(\hat{z})$, we have
\[
    \|R_{\DRS}^{\gamma}(z)\| \leq \frac{(1+\gamma \|A\|)\textnormal{gap}(\hat{z})}{\delta(A)}.
\]
\end{itemize}
\end{theorem}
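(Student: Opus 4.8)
The plan is to prove the two parts separately, relying on three facts already available: $R_{\DRS}^{\gamma}$ is $1$-Lipschitz (Lemma~\ref{lemma:monotone_Lipschitz}), $\Pi_{\mathcal{S}}$ is nonexpansive, and the exact correspondence of Theorem~\ref{thm:relationship}. The one auxiliary estimate I would record first is that $\textnormal{gap}(\cdot)$ is globally Lipschitz on $\mathcal{S}$ with modulus $\sqrt{2}\|A\|$. Writing $\textnormal{gap}(\hat z)=g_1(\hat x)-g_2(\hat y)$ with $g_1(\hat x)=\max_{y\in\Delta^m}\hat x^\top A y$ and $g_2(\hat y)=\min_{x\in\Delta^n} x^\top A\hat y$, each is a pointwise max/min of affine functions whose (sub)gradients are $A y$ or $A^\top x$ for $y\in\Delta^m,\ x\in\Delta^n$; since $\|y\|_2\le\|y\|_1=1$ these have norm at most $\|A\|$, so $g_1,g_2$ are each $\|A\|$-Lipschitz. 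Summing the two one-sided bounds and passing from $\|\Delta\hat x\|+\|\Delta\hat y\|$ to $\sqrt2\,\|(\Delta\hat x,\Delta\hat y)\|$ by Cauchy--Schwarz gives the modulus $\sqrt2\|A\|$.

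For part (i), let $z^\star\in Z^\star$ attain $\mathrm{dist}(z,Z^\star)=\|z-z^\star\|$, and set $\hat z^\star=\Pi_{\mathcal{S}}(z^\star)$, a Nash equilibrium by Theorem~\ref{thm:relationship}(i), so $\textnormal{gap}(\hat z^\star)=0$. Nonexpansiveness of $\Pi_{\mathcal S}$ gives $\|\Pi_{\mathcal S}(z)-\hat z^\star\|\le\|z-z^\star\|=\mathrm{dist}(z,Z^\star)$, and the Lipschitz estimate then yields $\textnormal{gap}(\Pi_{\mathcal S}(z))=\textnormal{gap}(\Pi_{\mathcal S}(z))-\textnormal{gap}(\hat z^\star)\le\sqrt2\|A\|\,\mathrm{dist}(z,Z^\star)$. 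Substituting the hypothesized error bound $\mathrm{dist}(z,Z^\star)\le\|R_{\DRS}^{\gamma}(z)\|/\tau$ produces the claimed inequality directly.

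For part (ii), the idea is to compare against the lift of the \emph{nearest} Nash equilibrium. Let $\hat z^\star$ be a closest Nash equilibrium to $\hat z$, so $\|\hat z-\hat z^\star\|=\mathrm{dist}(\hat z,\hat Z^\star)\le\textnormal{gap}(\hat z)/\delta(A)$ by~\eqref{duality_gap_sharp}. Using the lift $z=\hat z-\gamma F(\hat z)$ of Theorem~\ref{thm:relationship}(ii), the companion point $z^\star:=\hat z^\star-\gamma F(\hat z^\star)$ is a zero of $R_{\DRS}^{\gamma}$, and by linearity of $F$ the difference collapses to $z-z^\star=(I-\gamma F)(\hat z-\hat z^\star)$. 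The $1$-Lipschitz property then gives $\|R_{\DRS}^{\gamma}(z)\|=\|R_{\DRS}^{\gamma}(z)-R_{\DRS}^{\gamma}(z^\star)\|\le\|z-z^\star\|\le\|I-\gamma F\|\,\|\hat z-\hat z^\star\|$, so it only remains to bound the operator norm $\|I-\gamma F\|$.

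The step that carries the real content is this last estimate, and here the skew-symmetric block structure of $F$ is decisive. Expanding $\|(I-\gamma F)(v_1,v_2)\|^2=\|v_1-\gamma A v_2\|^2+\|v_2+\gamma A^\top v_1\|^2$, the cross terms $-2\gamma v_1^\top A v_2$ and $+2\gamma v_1^\top A v_2$ cancel, leaving $\|v_1\|^2+\|v_2\|^2+\gamma^2(\|A^\top v_1\|^2+\|A v_2\|^2)\le(1+\gamma^2\|A\|^2)\|v\|^2$; hence $\|I-\gamma F\|\le\sqrt{1+\gamma^2\|A\|^2}\le 1+\gamma\|A\|$. Chaining the inequalities gives $\|R_{\DRS}^{\gamma}(z)\|\le(1+\gamma\|A\|)\,\mathrm{dist}(\hat z,\hat Z^\star)\le(1+\gamma\|A\|)\textnormal{gap}(\hat z)/\delta(A)$, as required. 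The only genuinely delicate points are pinning down the gap's Lipschitz modulus $\sqrt2\|A\|$ in part (i) and selecting the matching lifted zero $z^\star$ of the nearest equilibrium in part (ii), so that the skew-symmetric cancellation yields the clean factor $1+\gamma\|A\|$; the remainder is triangle-inequality bookkeeping.
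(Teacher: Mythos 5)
Your proof is correct and follows essentially the same route as the paper's: part (i) via the $\sqrt{2}\|A\|$-Lipschitz continuity of the duality gap, nonexpansiveness of $\Pi_{\mathcal{S}}$, and the hypothesized error bound together with $\textnormal{gap}(\Pi_{\mathcal{S}}(z^\star))=0$; part (ii) via the lift $z^\star=\hat z^\star-\gamma F(\hat z^\star)$ of the nearest Nash equilibrium, the condition measure, and the $1$-Lipschitz property of $R_{\DRS}^{\gamma}$. The only (cosmetic) difference is that in part (ii) you obtain the slightly sharper intermediate bound $\|I-\gamma F\|\le\sqrt{1+\gamma^2\|A\|^2}$ by exploiting the skew-symmetric cancellation, whereas the paper simply applies the triangle inequality to get $\|z-z^\star\|\le(1+\gamma\|A\|)\|\hat z-\hat z^\star\|$ directly---both yield the same final constant.
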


\subsection{Algorithmic Framework}
We now outline the general framework of our algorithm and refer the reader to \Cref{algorithm_appendix} for additional details. 

The core idea is to warm-start our Douglas-Rachford-based regularized SSN method (DRSSN) using iterates generated by a phase of exclusively PRM$^+$ steps. The PRM$^+$ phase reduces the duality gap to bring iterates sufficiently close to a Nash equilibrium, thereby avoiding the instability that can arise when DRSSN is initialized by lifting an iterate arbitrarily far from a Nash equilibrium, starting a regime where curvature information is not yet informative. Moreover, the steps in the PRM$^+$ phase can be used to update the damping parameter for DRSSN. After the duality gap drops below a prescribed threshold, our algorithm makes a one-time switch from PRM$^+$ to DRSSN, after which only Newton steps are taken. This design ensures global efficiency from the first-order phase and rapid local convergence from the second-order phase. 

\begin{algorithm}[t]
\caption{Hybrid SSN method}
\begin{algorithmic}[1]
    \State{\textbf{Input}: $p_0 = z_0 \in \Delta^n \times \Delta^m$, $\theta_0 > 0$, $\gamma > 0$}
    \State{Calculate resolvent for DRSSN method}
    \For{$k = 0, 1, \dots $}
        \State{Update $p_{k+1} = p_k$ using one-step of PRM$^+$.}
        \If{ready to update $\theta_k$}
            \State{Update $\theta_k$ adaptively.}
        \EndIf
        \If{ready to switch to DRSSN method}
            \State{Set $z_k = p_k - \gamma F(p_k)$.}
            \State{Run \text{DRSSN}$(z_k, \theta_k)$.}
        \EndIf
    \EndFor
\end{algorithmic}
\end{algorithm}

The DRSSN subroutine implements our regularized semi-smooth Newton method. Each iteration involves solving a linear system to compute the Newton step $\Delta z_k$, followed by a damping parameter update that depends on changes in the residual norm $\|R_{\DRS}^{\gamma}\|$. While the duality gap is the standard measure of proximity to a Nash equilibrium and is natural for first-order methods that remain within the feasible strategy spaces, it is less effective for guiding second-order methods. In particular, the SSN iterates may leave the feasible region, and relying solely on the projected duality gap may fail to capture true progress. Instead, effective performance of our SSN method hinges on proper tuning of the damping parameter to control the residual norm, which in turn governs its convergence.

\begin{algorithm}[t]
\caption{DRSSN$(z, \lambda)$}
\begin{algorithmic}[1]
    \State{\textbf{Input}: $z_0 = z \in \mathbb{R}^{n + m}$, $\lambda_0 = \lambda > 0$, $\ell = 1.5$}
    \For{$k$ in $0, 1, \dots,$}
        \While{$\lambda_k \leq 10^{9}$}
        \State{Select $G_k \in \partial R_{\DRS}^{\gamma}(z_k)$.}
        \State{Use $G_k$ and solve the linear system in \Cref{drssn_update_rule} for $\Delta z_k$.}
        \State{Define $z' := z_k + \Delta z_k$.}
        \If{$\| R_{\DRS}^{\gamma}(z') \| < \|R_{\DRS}^{\gamma}(z_k)\|$}
            \State{Set $z_{k+1} = z'$.}
            \State{Update $\lambda_{k+1} = \lambda_k \cdot \ell$.}
            \State{\textbf{break}}
        \Else
            \State{Update $\lambda_{k+1} = \lambda_k / \ell$.}
        \EndIf
        \EndWhile
        \State{Update $\lambda_{k+1}$ using the adaptive scheme.}
    \EndFor
    \If{gap($\Pi_{\mathcal{S}}(z_{k+1}))$ is under desired gap}
        \State{\textbf{return} $\Pi_{\mathcal{S}}(z_{k+1})$.}
    \EndIf
\end{algorithmic}
\end{algorithm}
This distinction is crucial in our hybrid algorithm. The first-order phase reduces the duality gap to a chosen threshold, ensuring a stable lifted iterate for the second-order phase. In particular, if the duality gap of the PRM$^+$ iterates is sufficiently small, then by \Cref{dg_and_norm_relationship}, the residual norm of the corresponding lifted point is appropriately bounded, ensuring that it lies within the regime where our SSN theory guarantees local superlinear convergence. The second-order phase advances through Newton steps while adaptively adjusting the damping parameter based on the residual norm, thereby exploiting curvature information to achieve accelerated convergence. To this end, our SSN method employs two damping-update strategies: (i) a line-search regime, which adjusts the parameter according to whether the residual norm decreases, thus ensuring consistent global progress when iterates are far from equilibrium; and (ii) an adaptive regime, inspired by~\citet{Lin-2024-Specialized}, which modifies the parameter heuristically to stabilize performance when the line-search regime stalls. Further details on the adaptive scheme are provided in \Cref{algorithm_appendix}.

\subsection{Convergence Guarantee}

We now focus on the convergence guarantees of our hybrid algorithm. For ease of reference, we denote $\mathrm{dist}(z, Z^*) := \min\limits_{z^* \in Z^*} \|z - z^*\|$. 

The SSN steps of our hybrid algorithm are applied to the residual operator $R_{\DRS}^{\gamma}(\cdot)$, which enjoys several structural properties that enable a local superlinear convergence guarantee. In particular, by \Cref{lemma:monotone_Lipschitz}, $R_{\DRS}^{\gamma}(\cdot)$ is both monotone and Lipschitz continuous, so every choice $G \in \partial R_{\DRS}^{\gamma}(z)$ in an SSN step will yield a positive semidefinite matrix \citep[Lemma 3.5]{Xiao-2018-Regularized}. Then, the below stability condition is immediately satisfied \citep[Lemma 3.1]{Zhou-2005-Superlinear}.

\begin{lemma}\label{lemma:inv_norm_bounding}
For $G \in \partial R_{\DRS}^{\gamma}(z)$ and any $\mu > 0$,
    $\| (G + \mu I )^{-1} \| \leq \frac{1}{\mu}$.
\end{lemma}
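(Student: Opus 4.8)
The plan is to exploit the positive semidefiniteness of every $G \in \partial R_{\DRS}^{\gamma}(z)$, which the excerpt already grants through \Cref{lemma:monotone_Lipschitz} together with \citet[Lemma 3.5]{Xiao-2018-Regularized}. The only subtlety is that $G$ need not be symmetric, so ``positive semidefinite'' must be read in the form $v^\top G v \geq 0$ for all $v$. Consequently, eigenvalue-based arguments do not apply directly, and I would instead work entirely from this scalar inequality combined with Cauchy--Schwarz.

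First I would establish invertibility of $G + \mu I$. Since we are in finite dimensions and $G + \mu I$ is square, it suffices to verify injectivity. If $(G + \mu I)v = 0$, then taking the inner product with $v$ gives $v^\top G v + \mu \|v\|^2 = 0$; since $v^\top G v \geq 0$ and $\mu > 0$, this forces $\|v\| = 0$, hence $v = 0$. Therefore $G + \mu I$ is invertible and the expression $(G + \mu I)^{-1}$ is well defined.

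Next I would bound the operator norm of the inverse. Fix an arbitrary $w$ and set $v = (G + \mu I)^{-1} w$, so that $(G + \mu I)v = w$. Taking the inner product with $v$ yields
\begin{equation*}
v^\top G v + \mu \|v\|^2 = v^\top w.
\end{equation*}
Dropping the nonnegative term $v^\top G v$ and applying Cauchy--Schwarz on the right gives $\mu \|v\|^2 \leq v^\top w \leq \|v\|\,\|w\|$, whence $\mu \|v\| \leq \|w\|$, i.e.\ $\|(G + \mu I)^{-1} w\| \leq \tfrac{1}{\mu}\|w\|$. Since $w$ was arbitrary, this is precisely the claimed bound $\|(G + \mu I)^{-1}\| \leq \tfrac{1}{\mu}$.

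I do not anticipate a genuine obstacle here, as the result is an elementary consequence of positive semidefiniteness. The single point requiring care is to avoid invoking symmetry of $G$: both the invertibility and the norm bound must be derived from the scalar inequality $v^\top G v \geq 0$ rather than from any spectral decomposition, since the generalized Jacobian elements of $R_{\DRS}^{\gamma}(\cdot)$ are in general nonsymmetric.
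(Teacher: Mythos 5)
Your proof is correct and rests on the same key fact the paper uses: monotonicity and Lipschitz continuity of $R_{\DRS}^{\gamma}$ make every $G \in \partial R_{\DRS}^{\gamma}(z)$ positive semidefinite in the quadratic-form sense $v^\top G v \geq 0$, from which the bound follows. The paper simply delegates the final step to \citet[Lemma 3.1]{Zhou-2005-Superlinear}, whereas you spell out the elementary injectivity-plus-Cauchy--Schwarz argument explicitly, correctly avoiding any reliance on symmetry of $G$; this is exactly the reasoning behind the cited lemma.
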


Moreover, $R_{\DRS}^{\gamma}(\cdot)$ is piecewise affine. Consequently, it satisfies both strong semi-smoothness (\Cref{lemma:semismooth}) and a local error-bound condition described in \cref{lemma:local_error_bound}.
\begin{lemma}\label{lemma:local_error_bound}
For any $z$ in a local neighborhood around a point in $Z^*$,
\[
    \| R_{\DRS}^{\gamma}(z) \| \geq \tau \, \mathrm{dist}(z, Z^*)
\]
for some $\tau > 0$.
\end{lemma}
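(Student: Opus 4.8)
The plan is to read the desired estimate as an instance of metric subregularity of the piecewise affine map $R := R_{\DRS}^{\gamma}(\cdot)$ at each of its zeros, and to obtain it from the classical theorem of Robinson that polyhedral multifunctions are locally upper Lipschitz. The starting point is that $R$ is piecewise affine: the projection $\Pi_{\mathcal S}$ onto the polytope $\mathcal S$ is piecewise affine (projection onto a polyhedron is affine on each cell of the associated normal-fan subdivision), $J_F^\gamma$ is a fixed linear map, and sums and compositions of piecewise affine maps are again piecewise affine; the cellwise structure is exactly what is exhibited in \eqref{def:Jacobian}. Consequently the graph $\mathrm{gph}\,R \subseteq \mathbb{R}^{n+m}\times\mathbb{R}^{n+m}$ is a finite union of polyhedra, which is precisely the statement that the inverse $R^{-1}$ is a polyhedral multifunction.

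Granting this, I would apply Robinson's upper-Lipschitz theorem to $R^{-1}$ at the base value $0$. This produces a constant $\lambda>0$ and a neighborhood $V$ of $0$ in the range space such that
\[
R^{-1}(u) \ \subseteq\ R^{-1}(0) + \lambda\|u\|\,\mathbb{B} \qquad \text{for all } u \in V,
\]
where $\mathbb{B}$ is the closed unit ball and $R^{-1}(0) = Z^*$ is exactly the zero set (nonempty, since Nash equilibria exist and Theorem~\ref{thm:relationship}(ii) lifts each to a zero of $R$). To convert this range-space inclusion into the claimed domain-space bound near a point $z^* \in Z^*$, I would use that $R$ is continuous with $R(z^*)=0$, so there is a neighborhood $U$ of $z^*$ with $R(U)\subseteq V$. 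For any $z \in U$, setting $u=R(z)$ and using the trivial membership $z \in R^{-1}(R(z))$ gives
\[
z \ \in\ Z^* + \lambda\,\|R(z)\|\,\mathbb{B},
\qquad\text{i.e.}\qquad
\mathrm{dist}(z, Z^*) \ \le\ \lambda\,\|R(z)\|.
\]
Taking $\tau = 1/\lambda$ then yields $\|R_{\DRS}^{\gamma}(z)\| \ge \tau\,\mathrm{dist}(z,Z^*)$ on $U$, which is the assertion.

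The main obstacle is uniformity across the polyhedral pieces. A naive cellwise argument would fix the affine selection $R(z)=M_j(z-z^*)$ on the cell $P_j$ containing $z$ (using $R(z^*)=0$) and control $\mathrm{dist}(z,\,z^*+\ker M_j)$ by the smallest nonzero singular value of $M_j$; but the affine subspace $z^*+\ker M_j$ is generally strictly larger than the portion of $Z^*$ lying in $P_j$, so the nearest kernel point may escape the cell and this local distance need not control $\mathrm{dist}(z,Z^*)$. Stitching the finitely many cells together with a single constant is precisely the delicate step, and Robinson's theorem is exactly the tool that performs this stitching—which is why I would route the proof through the polyhedral-multifunction formulation rather than through an explicit cellwise estimate. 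The remaining verifications, that simplex projection is piecewise affine and that piecewise affinity is preserved under the sum-and-composition defining $R$, are routine and assemble directly from the structure already recorded in \eqref{def:Jacobian}.
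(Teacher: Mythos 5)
Your proposal is correct and follows essentially the same route as the paper, which likewise derives the local error bound from the piecewise-affine (hence polyhedral) structure of $R_{\DRS}^{\gamma}$ together with Robinson's upper-Lipschitz theorem for polyhedral multifunctions. Your write-up actually supplies more detail than the paper does — in particular the conversion of the range-space inclusion into the domain-space distance estimate and the remark on why a naive cellwise argument fails — but the underlying argument is the same.
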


This local error-bound condition is also known as local metric subregularity. In the context of matrix games, both the duality gap operator~\citep{Gilpin-2012-First} and the gradient operator $F(z)$~\citep{Wei-2021-Linear} have been shown to satisfy variants of this property. In our setting, the residual operator is piecewise affine and thus polyhedral, which by~\citet{Robinson-2009-Continuity} implies local metric subregularity. Thus, we establish a direct extension of prior results and show that the analysis applies more generally to any polyhedral mapping.

These conditions help enforce the local quadratic convergence of the Newton steps of our hybrid algorithm.

\begin{theorem}\label{local_quadratic_convergence}
The regularized semi-smooth Newton method defined by the update rule in \Cref{drssn_update_rule} applied to $R_{\DRS}^{\gamma}(\cdot)$ exhibits local quadratic convergence. Concretely, there exists a $k_0 \in \mathbb{N}$ such that the iterates $\{z_k\}_{k \geq k_0}$ satisfy
\begin{equation}\label{quadratic_convergence_result}
    \|R_{\DRS}^{\gamma}(z_{k+1})\| \leq C \|R_{\DRS}^{\gamma}(z_{k})\|^2
\end{equation}
where $C = \frac{1}{\tau^2}\left[L_2 \left( 2 + \frac{L_2}{\theta \tau} \right)^2 + \theta L_1 \left( 2 + \frac{L_2}{\theta \tau} \right) \right]$.
\end{theorem}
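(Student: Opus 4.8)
The plan is to run the standard local analysis for regularized semi-smooth Newton methods (in the spirit of \citet{Zhou-2005-Superlinear, Xiao-2018-Regularized}), specialized to $R_{\DRS}^{\gamma}(\cdot)$ so that every hypothesis is supplied by an earlier result: strong semi-smoothness with constant $L_2$ (\Cref{lemma:semismooth}), the Lipschitz bound with constant $L_1$ (\Cref{lemma:monotone_Lipschitz}(ii), so in fact $L_1=1$, though I keep it symbolic to match $C$), the resolvent estimate $\|(G_k+\mu_k I)^{-1}\|\le 1/\mu_k$ (\Cref{lemma:inv_norm_bounding}), and the local error bound $\|R_{\DRS}^{\gamma}(z)\|\ge \tau\,\mathrm{dist}(z,Z^*)$ (\Cref{lemma:local_error_bound}). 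Throughout I would work in the local regime where $\theta_k$ is held at a fixed value $\theta>0$, so $\mu_k=\theta\|r_k\|$, and fix $z^*\in Z^*$ to be a nearest zero to $z_k$, so that $R_{\DRS}^{\gamma}(z^*)=0$ and $\|z_k-z^*\|=\mathrm{dist}(z_k,Z^*)$. Combining $\mu_k=\theta\|r_k\|$ with the error bound and with Lipschitzness pins $\mu_k$ between $\theta\tau\|z_k-z^*\|$ and $\theta L_1\|z_k-z^*\|$, which is exactly the mechanism that converts the regularization into a quadratic decrease.

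First I would control the distance to $Z^*$. Writing the step as $z_{k+1}-z^*=(G_k+\mu_k I)^{-1}\big[(G_k+\mu_k I)(z_k-z^*)-r_k\big]$ and splitting the bracket into $\big(G_k(z_k-z^*)-r_k\big)+\mu_k(z_k-z^*)$, I would invoke strong semi-smoothness with base point $z^*$ and increment $z_k-z^*$—so the generalized Jacobian is evaluated at $z_k$, making $G_k\in\partial R_{\DRS}^{\gamma}(z_k)$ admissible, and $R_{\DRS}^{\gamma}(z^*)=0$—to get $\|G_k(z_k-z^*)-r_k\|\le L_2\|z_k-z^*\|^2$. Applying the resolvent bound $1/\mu_k$ together with $\mu_k\ge\theta\tau\|z_k-z^*\|$ on the first piece, and keeping the second piece as $\|z_k-z^*\|$, yields $\|z_{k+1}-z^*\|\le\big(1+\tfrac{L_2}{\theta\tau}\big)\|z_k-z^*\|$. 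A triangle inequality then gives the step-length bound $\|z_{k+1}-z_k\|\le\beta\,\|z_k-z^*\|$ with $\beta:=2+\tfrac{L_2}{\theta\tau}$, which is precisely the factor appearing in the target constant.

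Next I would bound the residual at the new iterate directly. Using the Newton equation $(G_k+\mu_k I)(z_{k+1}-z_k)=-r_k$ to substitute $r_k+G_k(z_{k+1}-z_k)=-\mu_k(z_{k+1}-z_k)$, I would write $R_{\DRS}^{\gamma}(z_{k+1})=\big[R_{\DRS}^{\gamma}(z_{k+1})-r_k-G_k(z_{k+1}-z_k)\big]-\mu_k(z_{k+1}-z_k)$ and apply strong semi-smoothness a second time, now with base point $z_{k+1}$ and increment $z_k-z_{k+1}$ (so the Jacobian is again evaluated at $z_k$, where $G_k$ lives), bounding the bracket by $L_2\|z_{k+1}-z_k\|^2$. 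This produces $\|R_{\DRS}^{\gamma}(z_{k+1})\|\le L_2\|z_{k+1}-z_k\|^2+\mu_k\|z_{k+1}-z_k\|$. Substituting the step-length bound, the upper estimate $\mu_k\le\theta L_1\|z_k-z^*\|$, and finally the error bound $\|z_k-z^*\|\le\tfrac{1}{\tau}\|R_{\DRS}^{\gamma}(z_k)\|$ collapses everything into $\|R_{\DRS}^{\gamma}(z_{k+1})\|\le\tfrac{1}{\tau^2}\big(L_2\beta^2+\theta L_1\beta\big)\,\|R_{\DRS}^{\gamma}(z_k)\|^2$, which is the claimed $C$ after expanding $\beta$.

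The main obstacle is not the per-step algebra but justifying the existence of $k_0$ and the self-consistency of the local regime. All of the estimates hold only on the neighborhood of $Z^*$ guaranteed by \Cref{lemma:local_error_bound}, and the distance recursion $\|z_{k+1}-z^*\|\le(1+\tfrac{L_2}{\theta\tau})\|z_k-z^*\|$ has a constant exceeding $1$, so it does not on its own keep iterates in that neighborhood. I would close this with an induction: once the hybrid algorithm has driven $\|r_{k_0}\|$ small enough that $C\|r_{k_0}\|\le q<1$ (achievable by lifting a sufficiently good PRM$^+$ iterate, via \Cref{dg_and_norm_relationship}), the quadratic bound gives $\|r_{k+1}\|\le q\|r_k\|$, hence geometric decay of the residuals; the error bound then forces $\mathrm{dist}(z_k,Z^*)\le\|r_k\|/\tau\to 0$, so the iterates remain in—and contract within—the neighborhood, validating the use of all four local ingredients at every $k\ge k_0$. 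A minor point to handle carefully is that the nearest zero $z^*$ varies with $k$; since the error bound is stated for the whole set $Z^*$, this causes no difficulty.
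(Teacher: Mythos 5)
Your proposal is correct and follows essentially the same route as the paper's proof: the same splitting of the Newton step around the nearest zero $z_k^*$, the same four ingredients (strong semi-smoothness, the $1/\mu$ resolvent bound, the local error bound, and Lipschitz continuity), the same step-length factor $2+\tfrac{L_2}{\theta\tau}$, and the same two-stage residual estimate yielding the identical constant $C$. The only difference is that you additionally supply the induction showing the iterates remain in the error-bound neighborhood once $C\|r_{k_0}\|<1$, a localization step the paper's proof leaves implicit.
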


\begin{remark}
Note that the above result does not rely on $\theta$ being fixed (see \Cref{pf:convergence}). 
Consequently, the final bound in \Cref{quadratic_convergence_result} also holds in the case where each $\theta_k \in [\underline{\theta}, \overline{\theta}]$, i.e. bounded between fixed lower and upper bounds for $\theta$. This holds in many adaptive methods~\citep{Xiao-2018-Regularized, Lin-2024-Specialized}, and ours, which vary $\theta$ at each time step.
\end{remark}
While the above analysis establishes the local quadratic convergence of our residual operator, the analysis applies more generally to any operator satisfying the stated conditions. Our analysis builds on the framework of ~\citet{Zhou-2005-Superlinear} and extends classical SSN theory by replacing the restrictive BD-regularity assumption~\citep{Xiao-2018-Regularized} with weaker and more broadly applicable conditions. We outline these conditions in detail below, which must be satisfied for any $z_1, z_2 \in \mathrm{dom}(F)$. 

\begin{condition}\label{t_locallyLip}
$F$ is $L_1$-Lipschitz: for a constant $L_1 > 0$,
\[
    \|F(z_1) - F(z_2) \| \leq L_1 \| z_1 - z_2 \|.
\]
\end{condition}

\begin{condition}\label{t_stronglySemi}
$F$ is strongly semi-smooth: for a constant $L_2 > 0$, 
\[
    \| F(z_1) - F(z_2) - G(z_1 - z_2) \| \leq L_2 \| z_1 - z_2 \|^2
\]
for $G \in \partial F(z_1)$.
\end{condition}

\begin{condition}\label{t_sharpness}
$F$ satisfies a local error-bound condition: 
\[
    \| F(z) \| \geq \tau \, \text{dist}(z, Z^*)
\]
for all $z \in N(Z^*)$, where $\tau > 0$.
\end{condition}

\begin{condition}\label{inv_norm_bounding}
For $G \in \partial F(z)$ and any $\mu > 0$,
    $\| (G + \mu I )^{-1} \| \leq \frac{1}{\mu}$.
\end{condition}

\begin{remark}
By our above analysis, if $F$ is monotone and Lipschitz continuous, \Cref{inv_norm_bounding} is immediately satisfied.
\end{remark}

In summary, our convergence analysis establishes that the Newton steps of our hybrid algorithm, defined by \Cref{drssn_update_rule}, achieves local quadratic convergence when applied to the residual operator $R_{\DRS}^{\gamma}(\cdot)$. Thus, our hybrid method effectively inherits the best-of-both worlds: global progress from FOMs and rapid local convergence of the regularized SSN method.

\section{Numerical Results}\label{section:numerical_results}
We conduct numerical experiments on a variety of matrix games, including Kuhn poker~\citep{Kuhn-1950-Simplified}, three families of matrix games arising from layered graph security settings~\citep{Cerny-2024-Layered}, and random normal and uniform matrix games. In the realistic game instances, such as Kuhn poker and layered security games, PRM$^+$ rapidly achieves high-precision last-iterate convergence, making them less suitable for evaluating the benefits of our hybrid SSN algorithm. Accordingly, in this section, we focus on random normal and uniform matrix games, where PRM$^+$ struggles to efficiently converge to high-precision Nash equilibria.

Our hybrid algorithm is evaluated under two main schemes: \textbf{PSSN v1} and \textbf{PSSN v2}. These schemes perform a one-time switch from PRM$^+$ (QA) to the regularized semi-smooth Newton method once the duality gap of the PRM$^+$ iterates drops under a fixed threshold, but differ in their updating of the damping parameter; the SSN stage of PSSN v1 is initialized with a fixed damping parameter, whereas the SSN stage of PSSN v2 uses a damping parameter tuned by the PRM$^+$ steps. For completeness, we also include a third scheme, \textbf{HPSSN}, in the tables below. HPSSN follows a similar warm-starting strategy but alternates back and forth between PRM$^+$ and SSN at different rates based on the duality gap, in an effort to reduce the number of costly DRSSN steps while still exploiting curvature information near equilibrium. Although this adaptive alternation is conceptually appealing, in practice it is difficult to tune effectively: varying switching thresholds and the number of DRSSN steps does not consistently yield strong results without extensive tuning.

Our algorithms are compared to PRM$^+$ last-iterate (LI) and quadratic averaging (QA) baselines, which each use alternation, and are run for $5 \times 10^5$ iterations. All algorithms are initialized at $(x, y) = \left((1/n)\mathbf{1}_n, (1/m)\mathbf{1}_m\right)$. 

For succinctness, we report numerical results averaged over 10 independently generated random uniform and normal matrix games, of size $400 \times 800$. These games are representative of the general trends we observe across different random matrix game sizes and highlight the efficacy of our hybrid approach over FOMs for achieving high-precision solutions. In particular, despite the one-time cost of calculating resolvents to initialize our DRSSN method, our PSSN algorithms achieve significant speedups in the high-accuracy regime ($10^{-8}$ duality gap or lower), underscoring the practical benefit of incorporating curvature information. 
Note that, once DRSSN reaches a precision of around $10^{-8}$, it takes almost no additional time to reach higher levels of precision, e.g. $10^{-12}$. Thus, our numerical experiments corroborate the theoretical superlinear convergence guarantee of our methods. 

All other numerical experiments, along with detailed descriptions of the algorithms and game instances, are provided in the appendix. \\
\vspace{-0.5cm}
\begin{table}[h]
\caption{Averaged clock times (in seconds) for $400 \times 800$ random uniform matrix games}
\label{random-uniform-comparison}
\begin{center}
\begin{tabular}{c|c c c c c c}
& \multicolumn{6}{c}{\textbf{Time to Reach Duality Gap Tolerance}} \\
\textbf{Method} & $10^{-2}$ & $10^{-4}$ & $10^{-6}$ & $10^{-8}$ & $10^{-10}$ & $10^{-12}$ \\ \hline \\
PRM$^+$ (LI)  & 0.007 & 0.059 & 3.848 & 21.69 &       &       \\
PRM$^+$ (QA)  & 0.006 & 0.030 & 0.293 & 6.424  & 21.97 &       \\
PSSN v1    & 0.089 & 0.376 & 3.463 & 5.767  & 5.922  & 5.940 \\
PSSN v2    & 0.090 & 0.385 & 3.638 & 4.290  & 4.307  & 4.323 \\
HPSSN  & 0.101 & 0.399 & 3.905 & 5.068  & 5.082  & 5.092
\end{tabular}
\end{center}
\end{table}
\vspace{-0.7cm}
\begin{table}[h]
\caption{Averaged clock times (in seconds) for $400 \times 800$ random normal matrix games}
\label{random-normal-comparison}
\begin{center}
\begin{tabular}{c|c c c c c c}
& \multicolumn{6}{c}{\textbf{Time to Reach Duality Gap Tolerance}} \\
\textbf{Method} & $10^{-2}$ & $10^{-4}$ & $10^{-6}$ & $10^{-8}$ & $10^{-10}$ & $10^{-12}$ \\
\hline \\
PRM$^+$ (LI)  & 0.006 & 0.087 & 3.325 & 23.51 &       &       \\
PRM$^+$ (QA)  & 0.006 & 0.036 & 0.366 & 4.923  & 19.83 &       \\
PSSN v1    & 0.128 & 0.690 & 4.238 & 5.267  & 5.289  & 5.306 \\
PSSN v2    & 0.128 & 0.691 & 3.983 & 5.055  & 5.077  & 5.094 \\
HPSSN  & 0.125 & 0.572 & 3.895 & 4.786  & 4.790  & 4.815
\end{tabular}
\end{center}
\end{table}

\newpage

\bibliography{ref}

\newpage
\appendix

\section{Proofs}
\subsection{Proofs for \Cref{sec:prelims}}
Recall that the DRS operator is given by
\begin{equation*}
T_{\DRS}^{\gamma} = \mathrm{Id} - J_{N_{\mathcal{S}}}^\gamma + J_F^\gamma \circ (2J_{N_{\mathcal{S}}}^\gamma - \mathrm{Id}), 
\end{equation*}
and the residual operator is given by $R_{\mathrm{DRS}}^{\gamma} = \mathrm{Id} - T_{\mathrm{DRS}}^{\gamma}$. 

Because $F(\cdot)$ is linear and $N_{\mathcal{S}}(\cdot)$ is a normal cone operator, their resolvents simplify to
\begin{equation*}
J_F^\gamma(z) = \begin{bmatrix} I & \gamma A \\ - \gamma A^\top & I \end{bmatrix}^{-1}z, \quad J_{N_{\mathcal{S}}}^{\gamma}(z) = \Pi_{\mathcal{S}}(z). 
\end{equation*}
As a consequence, the DRS operator takes the form 
\begin{equation*}
T_{\DRS}^{\gamma} = \mathrm{Id} - \Pi_{\mathcal{S}} + \begin{bmatrix} I & \gamma A \\ - \gamma A^\top & I \end{bmatrix}^{-1}(2\Pi_{\mathcal{S}} - \mathrm{Id}).
\end{equation*}
\begin{proof}[Proof of \Cref{lemma:monotone_Lipschitz}]
For any $z_1, z_2$, we have
\begin{eqnarray*}
\lefteqn{\|R_{\mathrm{DRS}}^{\gamma}(z_1) - R_{\mathrm{DRS}}^{\gamma}(z_2)\|^2} \\ 
& = & \|(z_1 - z_2) - (T_{\DRS}^{\gamma}(z_1) - T_{\DRS}^{\gamma}(z_2))\|^2 \\
& = & \| z_1 - z_2 \|^2 + \|T_{\DRS}^{\gamma}(z_1) - T_{\DRS}^{\gamma}(z_2)\|^2 - 2(z_1 - z_2)^\top\left(T_{\DRS}^{\gamma}(z_1) - T_{\DRS}^{\gamma}(z_2)\right). 
\end{eqnarray*}
It is known that $T_{\DRS}^{\gamma}$ is firmly nonexpansive~\citep{Ryu-2022-Large}. Equivalently,
\begin{equation*}
\|T_{\DRS}^{\gamma}(z_1) - T_{\DRS}^{\gamma}(z_2)\|^2 \leq (z_1 - z_2)^\top\left(T_{\DRS}^{\gamma}(z_1) - T_{\DRS}^{\gamma}(z_2)\right). 
\end{equation*}
Putting these pieces together yields 
\begin{eqnarray*}
\|R_{\mathrm{DRS}}^{\gamma}(z_1) - R_{\mathrm{DRS}}^{\gamma}(z_2)\|^2 & \leq & \| z_1 - z_2 \|^2 - (z_1 - z_2)^\top\left(T_{\DRS}^{\gamma}(z_1) - T_{\DRS}^{\gamma}(z_2)\right)  \\ 
& = & (z_1 - z_2)^\top\left(R_{\DRS}^{\gamma}(z_1) - R_{\DRS}^{\gamma}(z_2)\right). 
\end{eqnarray*}
This implies that $R_{\mathrm{DRS}}^{\gamma}(\cdot)$ is monotone. 

We also have 
\begin{equation*}
(z_1 - z_2)^\top\left(R_{\DRS}^{\gamma}(z_1) - R_{\DRS}^{\gamma}(z_2)\right) \leq \|z_1 - z_2\|\|R_{\DRS}^{\gamma}(z_1) - R_{\DRS}^{\gamma}(z_2)\|. 
\end{equation*}
Thus, we have 
\begin{equation*}
\|R_{\mathrm{DRS}}^{\gamma}(z_1) - R_{\mathrm{DRS}}^{\gamma}(z_2)\| \leq \|z_1 - z_2\|. 
\end{equation*}
This implies that $R_{\mathrm{DRS}}^{\gamma}(\cdot)$ is 1-Lipschitz. 
\end{proof}
\begin{proof}[Proof of \Cref{lemma:semismooth}]
The strong semi-smoothness of $R_{\mathrm{DRS}}^{\gamma}(\cdot)$ follows from the existing results that establish the semi-smoothness of projection operators. Since the Euclidean projection onto the product of two simplices is a piecewise affine mapping (polyhedral projection), it is strongly semi-smooth~\citep{Qi-1993-Nonsmooth}. See also~\citet[Vol. II, Chapter 7]{Facchinei-2003-Finite} and~\citet{Robinson-2009-Continuity}. As such, we have that $\Pi_{\mathcal{S}}(\cdot)$ is strongly semi-smooth. Since the strong semi-smoothness property is closed under scalar multiplication, summation, and composition, the residual map $R_{\mathrm{DRS}}^{\gamma}$ is strongly semi-smooth. 
\end{proof}

\subsection{Proofs for Section 3}

\begin{lemma}[Equivalence between Projection and Normal Cone]\label{projection_normalcone_equivalence}
Let $\mathcal{C}$ be a closed and convex set. Then $p \in \mathcal{C}$ is the projection of $x$ onto $\mathcal{C}$ if and only if $x - p \in N_{\mathcal{C}}(p)$. 
\end{lemma}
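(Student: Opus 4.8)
The plan is to prove this classical equivalence by first identifying the projection with the solution of a convex minimization problem and then extracting its first-order (variational) optimality condition, which coincides exactly with the normal cone membership. Recall that since $\mathcal{C}$ is closed and convex, the projection $p = \Pi_{\mathcal{C}}(x)$ exists and is the unique minimizer of $q \mapsto \tfrac{1}{2}\|x - q\|^2$ over $q \in \mathcal{C}$. The definition of the normal cone gives $N_{\mathcal{C}}(p) = \{v : v^\top(q - p) \leq 0 \text{ for all } q \in \mathcal{C}\}$, so the statement $x - p \in N_{\mathcal{C}}(p)$ is literally the variational inequality $(x-p)^\top(q - p) \leq 0$ for all $q \in \mathcal{C}$. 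The whole proof therefore reduces to showing that this variational inequality characterizes the projection.

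For the forward direction, I would assume $p = \Pi_{\mathcal{C}}(x)$ and fix an arbitrary $q \in \mathcal{C}$. By convexity, $p + t(q - p) \in \mathcal{C}$ for all $t \in [0,1]$, and minimality of $p$ gives $\|x - p\|^2 \leq \|x - p - t(q - p)\|^2$. Expanding the right-hand side yields $0 \leq -2t\,(x-p)^\top(q-p) + t^2\|q - p\|^2$; dividing by $t > 0$ and letting $t \to 0^+$ produces $(x-p)^\top(q-p) \leq 0$, which is exactly $x - p \in N_{\mathcal{C}}(p)$.

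For the reverse direction, I would assume $x - p \in N_{\mathcal{C}}(p)$ and show $p$ minimizes the distance. For any $q \in \mathcal{C}$, write
\[
\|x - q\|^2 = \|(x - p) - (q - p)\|^2 = \|x - p\|^2 - 2\,(x-p)^\top(q - p) + \|q - p\|^2.
\]
The cross term is nonnegative by the normal cone hypothesis, and the last term is nonnegative, so $\|x - q\|^2 \geq \|x - p\|^2$ for all $q \in \mathcal{C}$, identifying $p$ as the projection.

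There is no genuine obstacle here, as this is a foundational fact of convex analysis; the only point requiring a small amount of care is the forward direction, where one must introduce the feasible segment $p + t(q-p)$ and pass to the limit $t \to 0^+$ rather than differentiating naively, so as to obtain the one-sided inequality cleanly. Everything else is a direct translation between the projection's optimality condition and the definition of the normal cone.
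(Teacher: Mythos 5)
Your proposal is correct and follows essentially the same route as the paper: both reduce the claim to the variational inequality $(x-p)^\top(q-p) \le 0$ for all $q \in \mathcal{C}$ and observe that this is, by definition, the statement $x - p \in N_{\mathcal{C}}(p)$. The only difference is that the paper cites this characterization of the projection as well known, whereas you prove it from first principles in both directions; your added detail is sound.
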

\begin{proof}
It is well-known that $p \in \mathcal{C}$ is the projection of $x$ onto $\mathcal{C}$ if and only if 
\[
    \langle y - p, x - p \rangle \leq 0
\]
for all $y \in \mathcal{C}$. By the definition of the normal cone, this is equivalent to the condition that $x - p \in N_{\mathcal{C}}(p)$.
\end{proof}

\begin{proof}[Proof of \Cref{thm:relationship} (i)]
From \Cref{T_DRS_form}, a fixed point $z$ of $T_{\DRS}^{\gamma}$ satisfies
\[
    z = z - \Pi_{\mathcal{S}}(z) + \left( I + \gamma F\right)^{-1} \left( 2\Pi_{\mathcal{S}}(z) - z \right)
\]
or equivalently, $\left( I + \gamma F\right)^{-1} \left( 2\Pi_{\mathcal{S}}(z) - z \right) = \Pi_{\mathcal{S}}(z)$. 
It follows that
\[
    \left( I + \gamma F\right)(\Pi_{\mathcal{S}}(z)) = 2\Pi_{\mathcal{S}}(z) - z.
\]
Expanding and simplifying, we get that $\Pi_{\mathcal{S}}(z) + \gamma F\left(\Pi_{\mathcal{S}}(z)\right) = 2\Pi_{\mathcal{S}}(z) - z$, or equivalently, 
\begin{equation}\label{fp_pf_1}
    \frac{1}{\gamma} \left(\Pi_{\mathcal{S}}(z) - z\right) = F(\Pi_{\mathcal{S}}(z)).
\end{equation}
Furthermore, note that by the first-order optimality condition for projections, we have
\[
    \frac{1}{\gamma} \langle z - \Pi_{\mathcal{S}}(z), \tilde{z} - \Pi_{\mathcal{S}}(z) \rangle \leq 0
\]
for all $\tilde{z} \in \mathcal{S}$, or equivalently, 
\begin{equation}\label{fp_pf_2}
    \frac{1}{\gamma} \left(z - \Pi_{\mathcal{S}}(z) \right) \in N_{\mathcal{S}}(\Pi_{\mathcal{S}}(z)).
\end{equation}
Summing \Cref{fp_pf_1} and \Cref{fp_pf_2}, we get the desired result,
\[
    0 \in F(\Pi_{\mathcal{S}}(z)) + N_{\mathcal{S}}(\Pi_{\mathcal{S}}(z)).
\]
By the optimality condition for a Nash equilibrium, $\hat{z} = \Pi_{\mathcal{S}}(z)$ is a Nash equilibrium.
\end{proof}

\begin{proof}[Proof of \Cref{thm:relationship} (ii)]
Since $\hat{z}$ is a Nash equilibrium, we have that $0 \in F(\hat{z}) + N_{\mathcal{S}}(\hat{z})$. Equivalently, $u := -F(\hat{z}) \in N_{\mathcal{S}}(\hat{z})$. 

Consider $z := \hat{z} - \gamma F(\hat{z}) = \hat{z} + \gamma u$. Since $\hat{z} \in \mathcal{S}$ and $u \in N_{\mathcal{S}}(\hat{z})$ implies that $(\hat{z} + \gamma u) - \hat{z} = \gamma u \in N_{\mathcal{S}}(\hat{z})$, by \Cref{projection_normalcone_equivalence}, we have that $\hat{z} = \Pi_{\mathcal{S}}(\hat{z} + \gamma u)$. It follows that $\Pi_{\mathcal{S}}(z) = \Pi_{\mathcal{S}}(\hat{z} + \gamma u) = \hat{z}$, and we have that
\begin{align*}
    2\Pi_{\mathcal{S}}(z) - z &= 2 \Pi_{\mathcal{S}}(\hat{z}+\gamma u) - (\hat{z} + \gamma u) \\
    &= 2\hat{z} - (\hat{z} + \gamma u) \\
    &= \hat{z} - \gamma u.
\end{align*}
Applying $T_{\DRS}^{\gamma}$ to the lifted point $z$, we get that 
\begin{align*}
    T_{\DRS}^{\gamma}(z) &= z - \Pi_{\mathcal{S}}(z) + (I + \gamma F)^{-1}(2\Pi_{\mathcal{S}}(z) - z) \\
    &= (\hat{z} +\gamma u) - \hat{z} + (I + \gamma F)^{-1} (\hat{z} - \gamma u) \\
    &= (\hat{z} +\gamma u) - \hat{z} + (I + \gamma F)^{-1} (\hat{z} + \gamma F(\hat{z})) \\
    &= (\hat{z} + \gamma u) - \hat{z} + \hat{z} \\
    &= \hat{z} + \gamma u = z
\end{align*}
where the third line follows from the definition $u = -F(\hat{z})$. Thus, $z = \hat{z} - \gamma F(\hat{z})$ is a fixed point of $T_{\DRS}^{\gamma}$.
\end{proof}

\begin{proof}[Proof of \Cref{thm:generalized_jacobian}]
For any $i \in \mathcal{A}$, the closed-form simplex projection formula gives 
\[ 
    x_i = p_i - \alpha,\ \text{where}\ \alpha = \frac{1}{|\mathcal{A}|} \left( \sum_{j \in \mathcal{A}} p_j - 1 \right) 
\]
For $i \notin \mathcal{A}$, we have $x_i = 0$. 
Differentiating with respect to $p_j$, we get, for $i \in \mathcal{A}$, 
\[ 
    G_{ij} = \frac{\partial x_i}{ \partial p_j } =  
    \begin{cases}
          1 - \frac{1}{|\mathcal{A}|}, & \text{if } i = j \in \mathcal{A} \\[1ex]
          -\frac{1}{|\mathcal{A}|} & \text{if } j \in \mathcal{A},\ i\neq j \\[1ex]
          0 & \text{if } j \notin \mathcal{A}
    \end{cases} 
\]
On the other hand, for $i \notin \mathcal{A}$, we trivially have $G_{ij} = 0$ for all $j$. Equivalently, the Jacobian admits the compact representation
\begin{align*}
    G = \mathrm{diag}(a) - \frac{1}{\|a\|_1} aa^\top
\end{align*}
where $a$ is the indicator vector of the active set $\mathcal{A}$.
\end{proof}

\begin{lemma}\label{duality_gap_Lipschitz}
For any two points $z_1 = (x_1 \, y_1) \in \mathcal{S}$ and $z_2 = (x_2 \, y_2) \in \mathcal{S}$, we have 
\[  
    |\gap(z_1) - \gap(z_2)| \leq \sqrt{2} \| A\| \cdot \|z_1 - z_2\|.
\]
\end{lemma}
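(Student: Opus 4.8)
The plan is to exploit the separable structure of the duality gap. Writing $z=(x,y)$, the definition in \Cref{def:gap} splits as $\gap(z) = f(x) - g(y)$, where $f(x) := \max_{y'\in\Delta^m} x^\top A y'$ and $g(y) := \min_{x'\in\Delta^n} x'^\top A y$. The first term depends only on $x$ and the second only on $y$, so by the triangle inequality it suffices to show that $f$ and $g$ are each $\|A\|$-Lipschitz (in the Euclidean norm) on their respective simplices, and then combine the two bounds.

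First I would bound $f$. Let $y_1^\star$ attain the maximum defining $f(x_1)$. Since $y_1^\star$ is merely feasible for the maximization defining $f(x_2)$, we have $f(x_2) \ge x_2^\top A y_1^\star$, so $f(x_1) - f(x_2) \le (x_1-x_2)^\top A y_1^\star \le \|A\|\,\|x_1-x_2\|\,\|y_1^\star\|$. The key observation is that $y_1^\star \in \Delta^m$ gives $\|y_1^\star\|_2 \le \|y_1^\star\|_1 = 1$, hence $f(x_1)-f(x_2) \le \|A\|\,\|x_1-x_2\|$; swapping the roles of $x_1$ and $x_2$ yields $|f(x_1)-f(x_2)| \le \|A\|\,\|x_1-x_2\|$. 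The same ``evaluate one point's optimizer at the other point'' argument, now using $\|A^\top\|=\|A\|$ and $\|x'\|_2\le 1$ for $x'\in\Delta^n$, gives $|g(y_1)-g(y_2)| \le \|A\|\,\|y_1-y_2\|$.

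Finally I would assemble the pieces. The triangle inequality gives
\begin{equation*}
|\gap(z_1)-\gap(z_2)| \le |f(x_1)-f(x_2)| + |g(y_1)-g(y_2)| \le \|A\|\bigl(\|x_1-x_2\| + \|y_1-y_2\|\bigr).
\end{equation*}
To recover the stated constant, I convert the sum of the two block Euclidean norms into the joint norm on $z=(x,y)$: by Cauchy--Schwarz in $\mathbb{R}^2$, $\|x_1-x_2\| + \|y_1-y_2\| \le \sqrt{2}\,\sqrt{\|x_1-x_2\|^2 + \|y_1-y_2\|^2} = \sqrt{2}\,\|z_1-z_2\|$, which completes the bound.

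The main obstacle here is not the analysis but the bookkeeping of the constant: one must track that the simplex constraint makes $\|\cdot\|_2 \le \|\cdot\|_1 = 1$, so that no extra dimension-dependent factor appears, and that the factor $\sqrt{2}$ — rather than $2$ — arises precisely from passing from the $\ell_1$-type sum of block norms to the $\ell_2$ joint norm via Cauchy--Schwarz. Everything else reduces to the standard saddle-point fact that a pointwise max (or min) of linear functions is Lipschitz with constant governed by the operator norm of $A$.
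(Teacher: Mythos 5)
Your proof is correct and follows essentially the same route as the paper's: decompose the gap as a difference of a max-term and a min-term, show each is $\|A\|$-Lipschitz (your ``evaluate the optimizer of one point at the other'' step is just the explicit form of the paper's $\max$-of-differences inequality, with the same use of $\|y\|_2\le\|y\|_1=1$), and combine via the triangle inequality and $a+b\le\sqrt{2}\sqrt{a^2+b^2}$. No gaps.
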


\begin{proof}[Proof of \Cref{duality_gap_Lipschitz}]
By definition, 
\[
    \gap(x, y) = \max\limits_{\bar{y} \in Y} x^{\top} A\bar{y} - \min\limits_{\bar{x} \in X} \bar{x}^{\top} Ay.
\]
Let us define $\Phi(x) := \max\limits_{\bar{y} \in Y} x^{\top} A\bar{y}$ and $\Psi(y) := \min\limits_{\bar{x} \in X} \bar{x}^{\top} Ay$, so that $\gap(x, y) = \Phi(x) - \Psi(y)$. \\

We have that
\begin{align*}
    \Phi(x_1) - \Phi(x_2) &= \max\limits_{\bar{y} \in Y} x_1^{\top} A\bar{y} - \max\limits_{\hat{y} \in Y} x_2^{\top} A\hat{y} \\
    &\leq \max\limits_{\bar{y} \in Y} \left[ x_1^{\top} A \bar{y} - x_2^{\top} A \bar{y}\right] \\
    &= \max\limits_{\bar{y} \in Y} \langle x_1 - x_2, A\bar{y}\rangle \\
    &\leq \| A \| \cdot \|x_1 - x_2\|.
\end{align*}
where the last line follows by the Cauchy-Schwarz inequality and the fact that $\|y\| \leq 1$ for all $y \in \Delta^n$. The above result also implies that
\[
    \Phi(x_2) - \Phi(x_1) \leq \|A\| \cdot \|x_2 -x_1\|.
\]
Combining the above two results, we get that
\begin{equation}\label{dg_1}
    |\Phi(x_1) - \Phi(x_2)| \leq \|A\| \cdot \|x_1 - x_2\|.
\end{equation}
Similarly, we have that
\begin{align*}
    \Psi(y_1) - \Psi(y_2) &= \min\limits_{\bar{x} \in X} \bar{x}^{\top} Ay_1 - \min\limits_{\hat{x} \in X} \hat{x}^{\top} Ay_2 \\
    &\leq \max\limits_{\bar{x} \in X} \left[ \bar{x}^{\top} A y_1 - \bar{x}^{\top} A y_2\right] \\
    &= \max\limits_{\bar{x} \in X} \langle \bar{x}, A(y_1 - y_2)\rangle \\
    &\leq \| A \| \cdot \|y_1 - y_2\|
\end{align*}
where the last line follows by the Cauchy-Schwarz inequality and the fact that $\|x\| \leq 1$ for all $x \in \Delta^m$. The above result also implies that
\[
    \Psi(y_2) - \Psi(y_1) \leq \|A\| \cdot \|y_2 -y_1\|.
\]
Combining the above two results, we get that
\begin{equation}\label{dg_2}
    |\Psi(y_1) - \Psi(y_2)| \leq \|A\| \cdot \|y_1 - y_2\|.
\end{equation}
Since 
\begin{align*}
    \gap(z_1) - \gap(z_2) &= \gap(x_1, y_1) - \gap(x_2, y_2) \\
    &= \left(\Phi(x_1) - \Psi(y_1)\right) - \left(\Phi(x_2) - \Psi(y_2)\right) \\
    &= \left(\Phi(x_1) - \Phi(x_2)\right) - \left(\Psi(y_1) - \Psi(y_2)\right),
\end{align*}
we have that
\begin{align*}
    |\gap(z_1) - \gap(z_2)| &= \left|\Phi(x_1) - \Phi(x_2) + \Psi(y_1) - \Psi(y_2)\right| \\
    &\leq |\Phi(x_1) - \Phi(x_2)| + |\Psi(y_1) - \Psi(y_2)| \\
    &\leq \|A\| \cdot \left( \|x_1 - x_2\| + \|y_1 - y_2\|\right) \\
    &\leq \sqrt{2}\|A\| \cdot \|z_1 - z_2\|
\end{align*}
where the third line follows from applying the bounds of \Cref{dg_1} and \Cref{dg_2} and the final line follows from an application of the inequality $a+b \leq \sqrt{2}\sqrt{a^2+b^2}$ where $a = \|x_1 - x_2\|$, $b = \|y_1 - y_2\|$, and so $\sqrt{a^2+b^2} = \|z_1 - z_2\|$.
\end{proof}

\begin{proof}[Proof of \Cref{dg_and_norm_relationship} (i)]
Let $z^*$ be the closest optimal solution to $z$ in $Z^*$. Rewriting the given condition $\|R_{\DRS}^{\gamma}(z)\| \geq \tau \|z-z^*\|$, we have
\begin{equation}\label{dg_from_norm_1}
    \|z - z^*\| \leq \frac{\|R_{\DRS}^{\gamma}(z)\|}{\tau}.
\end{equation}

Let us define $\hat{z} := \Pi_{\mathcal{S}}(z)$ and $\hat{z}^* := \Pi_{\mathcal{S}}(z^*)$. Since projection is a nonexpansive operator, we have that
\begin{equation}\label{dg_from_norm_2}
    \|\hat{z}-\hat{z}^*\| \leq \|z-z^*\|.
\end{equation}
Furthermore, by \Cref{duality_gap_Lipschitz} applied to $\hat{z}$ and $\hat{z}^*$, we have that
\begin{equation}\label{dg_from_norm_3}
    |\gap(\hat{z}) - \gap(\hat{z}^*)| \leq \sqrt{2} \|A\| \cdot \|\hat{z}-\hat{z}^*\|
\end{equation}
Combining the results of \Cref{dg_from_norm_1}, \Cref{dg_from_norm_2}, and \Cref{dg_from_norm_3}, we get that 
\begin{align*}
    |\gap(\hat{z}) - \gap(\hat{z}^*)| \leq \sqrt{2}\|A\| \cdot \|\hat{z}-\hat{z}^*\| \leq \sqrt{2}\|A\| \cdot \|z-z^*\| \leq \frac{\sqrt{2}\|A\| \cdot \|R_{\DRS}^{\gamma}(z)\|}{\tau}.
\end{align*}
By \Cref{thm:relationship} (i), $\gap(\hat{z}^*) = 0$. Thus, we conclude that 
\[
    |\gap(\hat{z})| = |\gap(\Pi_{\mathcal{S}}(z))| \leq \frac{\sqrt{2}\|A\| \cdot \|R_{\DRS}^{\gamma}(z)\|}{\tau}. \qedhere
\]
\end{proof}

\begin{proof}[Proof of \Cref{dg_and_norm_relationship} (ii)]
Let $\hat{z}^*$ be the closest Nash equilibrium to $\hat{z}$, so that $\mathrm{dist}(\hat{z}, \hat{Z}^*) = \| \hat{z} - \hat{z}^* \|$. By \Cref{duality_gap_sharp}, we have that
\begin{equation}\label{bounding_norm_from_dg_1}
    \|\hat{z} - \hat{z}^*\| \leq \frac{\gap(\hat{z})}{\delta(A)}.
\end{equation}
Consider the lifted points $z := \hat{z} - \gamma F(\hat{z})$ and $z^* := \hat{z}^* - \gamma F(\hat{z}^*)$ of $\hat{z}$ and $\hat{z}^*$, respectively. Note that
\begin{equation}\label{bounding_norm_from_dg_2}
\begin{split}
    \|z - z^*\| &= \| (\hat{z} - \gamma F(\hat{z})) - (\hat{z}^* - \gamma F(\hat{z}^*))\| \\
    &= \| \hat{z} - \hat{z}^* - \gamma(F(\hat{z}) - F(\hat{z}^*))\| \\
    &\leq \|\hat{z} - \hat{z}^*\| + \gamma \|A\| \cdot \|\hat{z} - \hat{z}^*\| \\
    &= (1 + \gamma \|A\|) \|\hat{z} - \hat{z}^*\|
\end{split}
\end{equation}
where the third line follows from the triangle inequality and the definition of the operator $F$.

Finally, by \Cref{lemma:monotone_Lipschitz} (ii), we know that $R_{\DRS}^{\gamma}(\cdot)$ is $1$-Lipschitz, so 
\begin{equation}\label{bounding_norm_from_dg_3}
    \|R_{\DRS}^{\gamma}(z)\| = \|R_{\DRS}^{\gamma}(z) - R_{\DRS}^{\gamma}(z^*)\| \leq \|z - z^*\|.
\end{equation}
where the first equality follows by \Cref{thm:relationship} (ii): $\hat{z}^*$ is a Nash equilibrium so its lift $z^*$ satisfies $R_{\DRS}^{\gamma}(z^*) = 0$. 

Thus, we have that
\begin{align*}
    \|R_{\DRS}^{\gamma}(z)\| &\leq \|z - z^*\| \\
    &\leq (1 + \gamma \|A\|) \|\hat{z} - \hat{z}^*\| \\
    &\leq \frac{(1 + \gamma \|A\|) \gap(\hat{z})\|}{\delta(A)}
\end{align*}
where the first line follows from \Cref{bounding_norm_from_dg_3}, the second line follows from \Cref{bounding_norm_from_dg_2}, and the final line follows from \Cref{bounding_norm_from_dg_1}.
\end{proof}

\begin{proof}[Proof of \Cref{local_quadratic_convergence}]\label{pf:convergence}
Let $z_k^*$ be the closest optimal solution (zero of $R_{\DRS}^{\gamma}$) to $z_k$. Note that by definition, for $\GJ_k \in R_{\DRS}^{\gamma}(z_k)$, we have
\[
    \Delta z_k = - (\GJ_k + \theta \| R_{\DRS}^{\gamma}(z_k) \| I )^{-1} R_{\DRS}^{\gamma}(z_k).
\]
Adding and subtracting $z_k^* - z_k$ and factoring, we get that 
\begin{align*}
\Delta z_k &= - (\GJ_k + \theta \| R_{\DRS}^{\gamma}(z_k) \| I )^{-1} R_{\DRS}^{\gamma}(z_k) \\
&= - (\GJ_k + \theta \| R_{\DRS}^{\gamma}(z_k) \| I )^{-1} R_{\DRS}^{\gamma}(z_k) - (z_k^* - z_k) + (z_k^* - z_k) \\
&= -(\GJ_k + \theta \| R_{\DRS}^{\gamma}(z_k) \| I )^{-1} \left[ R_{\DRS}^{\gamma}(z_k) + (\GJ_k + \theta \| R_{\DRS}^{\gamma}(z_k) \| I)(z_k^* - z_k) \right] + (z_k^* - z_k).
\end{align*}

Taking the norm of both sides, it follows that 
\[
\| \Delta z_k \| = \left\|-(\GJ_k + \theta \| R_{\DRS}^{\gamma}(z_k) \| I )^{-1} \left[ R_{\DRS}^{\gamma}(z_k) + (\GJ_k + \theta \| R_{\DRS}^{\gamma}(z_k) \| I)(z_k^* - z_k) \right] + (z_k^* - z_k) \right\|.
\]
Applying the Triangle Inequality to the right-hand side and expanding gives
\begin{align*}
\| \Delta z_k \| &= \left\|-(\GJ_k + \theta \| R_{\DRS}^{\gamma}(z_k) \| I )^{-1} \left[ R_{\DRS}^{\gamma}(z_k) + (\GJ_k + \theta \| R_{\DRS}^{\gamma}(z_k) \| I)(z_k^* - z_k) \right] + (z_k^* - z_k) \right\| \\
&\leq \|(\GJ_k + \theta \| R_{\DRS}^{\gamma}(z_k) \| I )^{-1} \| \cdot \| R_{\DRS}^{\gamma}(z_k) + \GJ_k (z_k^* - z_k) + \theta \| R_{\DRS}^{\gamma}(z_k) \| (z_k^* - z_k) \| \\
&+ \| z_k^* - z_k \|.
\end{align*}

Applying \Cref{lemma:inv_norm_bounding} on the first factor, we get that
\begin{align*}
\| \Delta z_k \| &\leq \| (\GJ_k + \theta \| R_{\DRS}^{\gamma}(z_k) \| I )^{-1} \| \cdot \| R_{\DRS}^{\gamma}(z_k) + \GJ_k (z_k^* - z_k) + \theta \| R_{\DRS}^{\gamma}(z_k) \| (z_k^* - z_k) \| \\
&+ \| z_k^* - z_k \| \\
&\leq \frac{1}{\theta \| R_{\DRS}^{\gamma}(z_k) \|} \| R_{\DRS}^{\gamma}(z_k) + \GJ_k (z_k^* - z_k) + \theta \| R_{\DRS}^{\gamma}(z_k) \| (z_k^* - z_k) \| + \| z_k^* - z_k \| \\
&\leq \frac{1}{\theta \| R_{\DRS}^{\gamma}(z_k) \|} \left( \| R_{\DRS}^{\gamma}(z_k) - \GJ_k (z_k - z_k^*) \| + \theta \| R_{\DRS}^{\gamma}(z_k) \| \cdot \| z_k - z_k^* \| \right) + \| z_k^* - z_k \|.
\end{align*}
where the final line follows from the Triangle Inequality. Since $R_{\DRS}^{\gamma}(\cdot)$ is strongly semi-smooth (\Cref{lemma:semismooth}), we know that 
\[
    \| R_{\DRS}^{\gamma}(z_k) - \GJ_k (z_k - z_k^*) \| \leq L_2 \| z_k - z_k^* \|^2
\]
for some $L_2 > 0$.
Plugging this result in above, we get that 
\begin{equation}\label{zk_bound_1}
\begin{split}
\| \Delta z_k \| &\leq \frac{1}{\theta \| R_{\DRS}^{\gamma}(z_k) \|} \left( \| R_{\DRS}^{\gamma}(z_k) - \GJ_k (z_k - z_k^*) \| + \theta \| R_{\DRS}^{\gamma}(z_k) \| \cdot \| z_k - z_k^* \| \right) + \| z_k^* - z_k \| \\
&\leq \frac{1}{\theta \| R_{\DRS}^{\gamma}(z_k) \|} \left( L_2 \| z_k - z_k^* \|^2 + \theta \| R_{\DRS}^{\gamma}(z_k) \| \cdot \| z_k - z_k^* \| \right) + \| z_k^* - z_k \| \\
&= 2 \| z_k - z_k^* \| + \frac{L_2}{\theta \| R_{\DRS}^{\gamma}(z_k) \|} \| z_k - z_k^* \|^2. 
\end{split}
\end{equation}
Since $R_{\DRS}^{\gamma}(\cdot)$ satisfies the local error-bound (\Cref{lemma:local_error_bound}), we have that 
\[
    \| R_{\DRS}^{\gamma}(z_k) \| = \| R_{\DRS}^{\gamma}(z_k) - R_{\DRS}^{\gamma}(z_k^*)\| \geq \tau \| z_k - z_k^* \|.
\]
Taking the reciprocal of both sides gives
\[
    \frac{1}{\| R_{\DRS}^{\gamma}(z_k) \|} \leq \frac{1}{\tau \| z_k - z_k^* \|}.
\]
Plugging this bound into the result of \Cref{zk_bound_1} gives
\begin{equation}\label{zk_bound}
\begin{split}
\| \Delta z_k \| &\leq 2 \| z_k - z_k^* \| + \frac{L_2}{\theta \| R_{\DRS}^{\gamma}(z_k) \|} \| z_k - z_k^* \|^2 \\
 &\leq 2 \| z_k - z_k^* \| + \frac{L_2}{\theta \tau \| z_k - z_k^* \|} \| z_k - z_k^* \|^2 \\
&= \left( 2 + \frac{L_2}{\theta \tau} \right) \| z_k - z_k^* \|.
\end{split}
\end{equation}

We now focus on the norm of the residual map at consecutive iterates $z_k$ and $z_{k+1}$. By definition,
\[
    \| R_{\DRS}^{\gamma}(z_{k+1}) \| = \| R_{\DRS}^{\gamma}(z_k + \Delta z_k) \|.
\]
Adding and subtracting $R_{\DRS}^{\gamma}(z_k) + \GJ_k \Delta z_k$ and then applying the Triangle Inequality, we get that
\begin{equation}\label{f_zk_iterate_1}
\begin{split}
\| R_{\DRS}^{\gamma}(z_{k+1}) \| &= \| R_{\DRS}^{\gamma}(z_k + \Delta z_k) \| \\
&=  \| R_{\DRS}^{\gamma}(z_k + \Delta z_k) - R_{\DRS}^{\gamma}(z_k) - \GJ_k \Delta z_k + R_{\DRS}^{\gamma}(z_k) + \GJ_k \Delta z_k \| \\
&\leq \| R_{\DRS}^{\gamma}(z_k + \Delta z_k) - R_{\DRS}^{\gamma}(z_k) - \GJ_k \Delta z_k \| + \| R_{\DRS}^{\gamma}(z_k) + \GJ_k \Delta z_k \|. 
\end{split}
\end{equation}

Since $R_{\DRS}^{\gamma}(\cdot)$ is strongly semi-smooth (\Cref{lemma:semismooth}), we once again have that 
\begin{equation}\label{f_zk_stronglySemi}
    \| R_{\DRS}^{\gamma}(z_k + \Delta z_k) - R_{\DRS}^{\gamma}(z_k) - \GJ_k \Delta z_k \| \leq L_2 \| \Delta z_k\|^2.
\end{equation}

On the other hand, by the definition of $\Delta z_k$, we have that 
\[
    R_{\DRS}^{\gamma}(z_k) + (\GJ_k + \theta \| R_{\DRS}^{\gamma}(z_k) \| I) \Delta z_k = 0.
\]
Equivalently, after expanding, rearranging, and taking the norm of both sides, we have that
\begin{equation}\label{f_zk_norm}
\|R_{\DRS}^{\gamma}(z_k) + \GJ_k \Delta z_k \| = \theta \| R_{\DRS}^{\gamma}(z_k) \| \cdot \| \Delta z_k \|.
\end{equation}

Plugging in the results of \Cref{f_zk_stronglySemi} and \Cref{f_zk_norm}
into the final line of \Cref{f_zk_iterate_1}, we get that
\begin{equation}\label{f_zk_iterate}
\begin{split}
\| R_{\DRS}^{\gamma}(z_{k+1}) \| &\leq \| R_{\DRS}^{\gamma}(z_k + \Delta z_k) - R_{\DRS}^{\gamma}(z_k) - \GJ_k \Delta z_k \| + \| R_{\DRS}^{\gamma}(z_k) + \GJ_k \Delta z_k \| \\
&\leq L_2 \| \Delta z_k\|^2 + \theta \| R_{\DRS}^{\gamma}(z_k) \| \cdot \| \Delta z_k \|.
\end{split}
\end{equation}

Substituting the result from \Cref{zk_bound} into the result 
from \Cref{f_zk_iterate} and factoring, we get that
\begin{equation}\label{t_nextiterate_norm}
\begin{split}
\| R_{\DRS}^{\gamma}(z_{k+1}) \| 
&\leq L_2 \| \Delta z_k \|^2 + \theta \| R_{\DRS}^{\gamma}(z_k) \| \cdot \| \Delta z_k \| \\
&\leq L_2 \left( 2 + \frac{L_2}{\theta \tau} \right)^2 \| z_k - z_k^* \|^2 + \theta \| R_{\DRS}^{\gamma}(z_k) \|  \left( 2 + \frac{L_2}{\theta \tau} \right) \| z_k - z_k^* \| \\
&\leq \left[ L_2 \left( 2 + \frac{L_2}{\theta \tau} \right)^2 + \theta L_1 \left( 2 + \frac{L_2}{\theta \tau} \right) \right] \| z_k - z_k^* \|^2
\end{split}
\end{equation}
where the final line follows as $R_{\DRS}^{\gamma}(\cdot)$ is $1$-Lipschitz (\Cref{lemma:monotone_Lipschitz} (ii)), or equivalently,
\[
    \| R_{\DRS}^{\gamma}(z_k) \| = \| R_{\DRS}^{\gamma}(z_k) - R_{\DRS}^{\gamma}(z_k^*) \| \leq L_1 \| z_k - z_k^* \|.
\]
Since $R_{\DRS}^{\gamma}(\cdot)$ satisfies the local error-bound condition (\Cref{lemma:local_error_bound}), we have that
\[
    \| R_{\DRS}^{\gamma}(z_{k}) \| = \| R_{\DRS}^{\gamma}(z_{k}) - R_{\DRS}^{\gamma}(z_{k}^*) \| \geq \tau \| z_{k} - z_{k}^* \|.
\]
Squaring both sides and rearranging, it follows that 
\[
    \|z_{k} - z_{k}^*\|^2 \leq \frac{\|R_{\DRS}^{\gamma}(z_k)\|^2}{\tau^2}
\]
Using the above result to bound the right-hand side of \Cref{t_nextiterate_norm} above, we get that 
\begin{align*}
\| R_{\DRS}^{\gamma}(z_{k+1}) \| &\leq \left[ L_2 \left( 2 + \frac{L_2}{\theta \tau} \right)^2 + \theta L_1 \left( 2 + \frac{L_2}{\theta \tau} \right) \right] \| z_k - z_k^* \|^2 \\
&\leq \frac{1}{\tau^2}\left[ L_2 \left( 2 + \frac{L_2}{\theta \tau} \right)^2 + \theta L_1 \left( 2 + \frac{L_2}{\theta \tau} \right) \right] \| R_{\DRS}^{\gamma}(z_k) \|^2.
\end{align*}
which gives us the desired local quadratic convergence of \Cref{quadratic_convergence_result}.
\end{proof}

\section{Our Algorithm}\label{algorithm_appendix}
\subsection{Predictive Regret Matching$^+$}
We first provide an overview of the Predictive Regret Matching$^+$ (PRM$^+$) algorithm, and then discuss its effectiveness over other first-order methods for game solving.

\textbf{Description of PRM$^+$.} At a high level, the classical Regret Matching (RM) algorithm maintains a cumulative regret vector over actions, measuring how much better each pure action would have performed relative to the mixed strategy that is played. At each iteration, the algorithm assigns probability mass only to actions with positive cumulative regret, proportional to the magnitude of that regret. Regret Matching$^+$ (RM$^+$) is a simple yet powerful modification of this idea; after each update, the cumulative regrets are clipped at zero (using the operator $[\cdot]^+ = \max\{\cdot, 0 \}$), thus preventing negative regrets from accumulating and yielding significantly improved empirical performance. The predictive variant PRM$^+$ further incorporates a prediction vector $m^t$ directly in its update. This allows the algorithm to estimate the next loss and adjust its iterate in anticipation, rather than reacting only after observing the realized loss $\ell^t$. As a result, PRM$^+$ exhibits faster convergence in practice.

Below, we provide the pseudocode of PRM$^+$, taken directly from \citet{Farina-2021-Faster}. 

\begin{algorithm}[H]
\caption{(Predictive) regret matching$^+$ [(P)RM$^+$]}
\label{alg:prmplus}
\begin{algorithmic}[1]
\State $z^0 \gets 0 \in \mathbb{R}^n,\quad x^0 \gets \frac{1}{n}\mathbf{1} \in \Delta^n$
\Function{NextStrategy}{$m^t$}
    \State $\theta^t \gets \bigl[z^{t-1} + \langle m^t, x^{t-1}\rangle \mathbf{1} - m^t\bigr]^+$ \Comment{Set $m^t = 0$ for the non-predictive version}
    \If{$\theta^t \neq 0$}
        \State \Return $x^t \gets \theta^t / \|\theta^t\|_1$
    \Else
        \State \Return $x^t \gets$ arbitrary point in $\Delta^n$
    \EndIf
\EndFunction
\Function{ObserveLoss}{$\ell^t$}
    \State $z^t \gets \bigl[z^{t-1} + \langle \ell^t, x^t\rangle \mathbf{1} - \ell^t\bigr]^+$
\EndFunction
\end{algorithmic}
\end{algorithm}

The game solving literature also uses a variety of heuristics for stronger numerical performance. The \emph{quadratic averaging} scheme uses a weighted average of past iterates in which the weight assigned to the iterate at time step $t$ is $t^2$, while the \emph{last-iterate} scheme simply uses the last iterate without any averaging. Finally, \textit{alternation} (or \textit{alternating updates}) refers to updating the players sequentially within each iteration, which is commonly used to further improve convergence behavior in practice. 

\textbf{Choice of PRM$^+$ as First-Order Method.} Our hybrid SSN framework is not tied to PRM$^+$ specifically. The theoretical conditions (\Cref{dg_and_norm_relationship}) governing the transition to the second-order phase depend only on the quality of the current iterate, as measured by the duality gap. Consequently, any first-order method that reliably reduces the duality gap, including those guaranteeing last-iterate (or average) convergence in these zero-sum games, is compatible with our setup. 

Our focus on PRM$^+$ in our experiments is motivated by the fact that it is the state-of-the-art practical first-order method for large-scale zero-sum games. In particular, the game-solving literature overwhelmingly finds that PRM$^+$ and its extensive-form variant PCFR$^+$ outperform other first-order methods on saddle-point problems. This is consistent with our own empirical evaluations. For completeness, we report the relative performances of extragradient (EG) and optimistic gradient descent-ascent (OGDA) against PRM$^+$ for the $400 \times 800$ random matrix games tested in the main body below.

\begin{table}[h]
\centering
\caption{Averaged clock times (seconds) for $400\times800$ random matrix games.}
\label{tab:times-400x800-eg-ogda}

\begin{tabular}{c|c c c c c c}
\multicolumn{7}{c}{\textit{Random Normal}} \\[0.25em]
& \multicolumn{6}{c}{\textbf{Time to Reach Duality Gap Tolerance}} \\
\textbf{Method} 
& $10^{-2}$ 
& $10^{-4}$ 
& $10^{-5}$ 
& $9\cdot10^{-6}$ 
& $7.5\cdot10^{-6}$ 
& $10^{-6}$ \\
\hline
EG             & 0.322 & 7.255 & 46.532 & 51.580 & 59.049 & $(\ast)$ \\
OGDA           & 0.070 & 1.584 & 10.378 & 11.381 & 13.342 & $(\ast)$ \\
PRM$^{+}$ (LI) & 0.006 & 0.087 & 0.465  & 0.505  & 0.592  & 0.851 \\
PRM$^{+}$ (QA) & 0.006 & 0.036 & 0.110  & 0.116  & 0.131  & 0.165 \\
\end{tabular}

\vspace{0.75em}

\begin{tabular}{c|c c c c c}
\multicolumn{6}{c}{\textit{Random Uniform}} \\[0.25em]
& \multicolumn{5}{c}{\textbf{Time to Reach Duality Gap Tolerance}} \\
\textbf{Method} 
& $10^{-2}$ 
& $10^{-4}$ 
& $10^{-5}$ 
& $9\cdot10^{-6}$ 
& $10^{-6}$ \\
\hline
EG             & 0.436 & 9.645 & 65.302 & $(\dagger)$ & $(\ast)$ \\
OGDA           & 0.099 & 2.140 & 14.361 & 16.340      & $(\ast)$ \\
PRM$^{+}$ (LI) & 0.007 & 0.059 & 0.301  & 0.344       & 3.848 \\
PRM$^{+}$ (QA) & 0.006 & 0.030 & 0.084  & 0.086       & 0.293 \\
\end{tabular}
\end{table}

In these experiments, we run both EG and OGDA for $10^{5}$ iterations; this choice is deliberate, as each iteration is more expensive than a PRM$^+$ iteration, and the differences in both runtime and performance are already clearly visible. 

Across the ten random seeds tested, the symbol $(\dagger)$ indicates that one seed failed to reach the specified duality-gap threshold, while the symbol $(\ast)$ indicates that nine seeds failed to reach the threshold. EG and OGDA are substantially slower across the full range of tolerances, whereas PRM$^+$ reaches comparable accuracy orders of magnitude faster. These results further justify our use of PRM$^+$ (with alternation under both quadratic-averaging and last-iterate schemes) as the primary baseline in \Cref{random-uniform-comparison} and \Cref{random-normal-comparison}. 

In summary, the strong empirical performance of PRM$^+$, together with its scalability to extensive-form games via PCFR$^+$, makes it a natural and representative choice for both benchmarking and warm-starting our hybrid algorithms. We provide further details on the integration of PRM$^+$ with our methods in the following sections.

\subsection{Adaptive Scheme}
The $\beta_0$ coefficient serves as a contraction factor for the damping parameter $\lambda$, and its magnitude reflects the relative weight placed on second-order information from $\GJ_k \in \partial R_{\DRS}^{\gamma}(x_k, y_k)$). As the iterates approach the optimal solution, the residual norm  $\|R_{\DRS}^{\gamma}(x_k, y_k)\|$ decreases, in turn causing $\beta_0$ to shrink.

$\psi$ measures the quality of the Newton direction $\begin{bmatrix}
    \Delta x \\ \Delta y
\end{bmatrix}$. A small value of $\psi$ indicates that the direction is not informative, while a large value indicates that the direction is informative and effectively reduces the residual norm. Based on the value of $\psi$, the adaptive scheme, mirroring that of~\citet{Lin-2024-Specialized}, for updating $\lambda_{k+1}$ is as follows:
\[
    \lambda_{k+1} = \begin{cases}
        \max \{ \underline{\lambda}, \beta_0 \lambda_k \} & \text{ if $\psi \geq \alpha_2$} \\
        \beta_1 \lambda_k & \text{ if $\alpha_1 \leq \psi < \alpha_2$} \\
        \min \{ \overline{\lambda}, \beta_2 \lambda_k \} & \text{ if $\psi < \alpha_1$} \\
    \end{cases}
\]
Note that this adaptive update rule distinguishes three regimes:
\begin{itemize}
    \item \textbf{High-quality direction} ($\psi \geq \alpha_2$): the step produces significant progress toward an optimal solution, so the damping parameter is moderately decreased by multiplying by a contraction factor $\beta_0 < 1$.
    \item \textbf{Moderate-quality direction} ($\alpha_1 \leq \psi < \alpha_2$): the step produces moderate but not significant progress towards an optimal solution, so the damping parameter is moderately increased by a factor $\beta_1 > 1$.
    \item \textbf{Low-quality direction} ($\psi < \alpha_1$): the step produces little to no progress toward an optimal solution, so the damping parameter is significantly increased by a factor $\beta_2 \gg 1$.
\end{itemize}

In our experiments, we use the parameters $\alpha_1 = 10^{-2}$, $\alpha_2 = 5$, $\beta_1 = 2$, $\beta_2 = 5$, $\underline{\lambda} = 10^{-15}$, and $\overline{\lambda} = 10^{15}$.
\subsection{Description of Hybrid Algorithms}

\textbf{PSSN v1} performs a one-time switch from PRM$^+$ with quadratic averaging and alternation to the DRSSN method once the duality gap drops under some specified fixed threshold. After switching, the SSN method is initialized with a fixed damping parameter of $1$ and is run exclusively thereafter. This makes PSSN v1 the simplest hybrid variant, but its rigidity in fixing the damping parameter can limit its speed across different problem instances.

\textbf{PSSN v2} also performs a one-time switch from PRM$^+$ with quadratic averaging and alternation to the DRSSN method once the duality gap drops under some specified fixed threshold. Every $500$ PRM$^+$ iterations, the damping parameter is iteratively updated using the adaptive scheme. After switching, the SSN method is initialized with this updated damping parameter. As a result, PSSN v2 is often more adapted to the dynamics of the problem, and often achieves better stability and efficiency than PSSN v1, although at the cost of some additional updating during the PRM$^+$ phase.

\textbf{HPSSN} adopts a more dynamic approach, alternating between PRM$^+$ and DRSSN throughout the run. At different stages, the algorithm attempts to lift a PRM$^+$ iterate and switch into DRSSN, taking only a few Newton steps before deciding whether superlinear convergence is being achieved. If not, it projects the current iterate back onto $\mathcal{S}$ and switches back to PRM$^+$. This strategy attempts to reduce the number of system solves from the SSN steps while still exploiting curvature information near equilibrium, striking a balance between the low per-iteration cost of first-order methods and the fast local convergence of second-order methods. We reiterate the discussion of \Cref{section:numerical_results}: though this idea is conceptually appealing, it is difficult to tune effectively in practice and depends heavily on each problem instance. Given these shortcomings, we only include it in the $400 \times 800$ matrix game experiments.

\section{Additional Numerical Experiments}
We provide both detailed descriptions for each game instance and additional tests comparing our algorithms to the PRM$^+$ baselines. 
\subsection{Realistic Game Instances}

\textbf{Kuhn poker} ~\citep{Kuhn-1950-Simplified}. At the
beginning of the game, the two players each pay one chip to the pot, and are dealt a single private
card from a deck containing three cards: Jack, Queen, and King. The first player can check or bet, putting an additional chip in the pot. Then, the second player can check or bet after the first player’s
check, or fold/call the first player’s bet. If the second player bets, the first player can
decide whether to fold or to call the bet. At the showdown, the player with the highest card who has
not folded wins all the chips in the pot. The payoff matrix for Kuhn poker has dimension $27 \times 64$ with $690$ nonzeros.
\begin{center}
\includegraphics[width=0.8\linewidth]{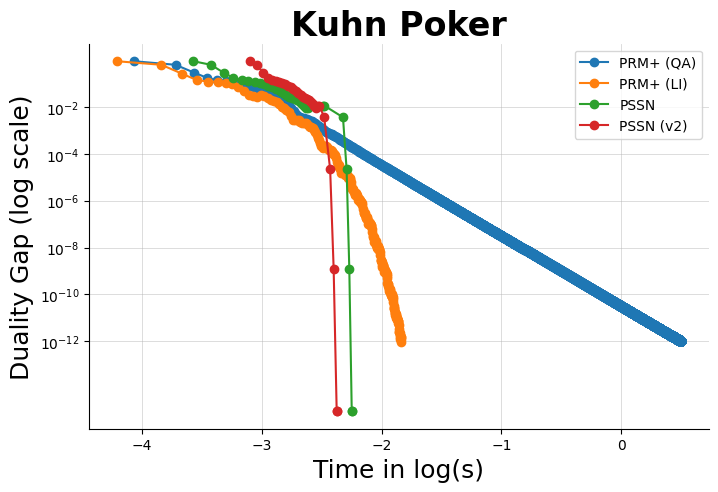}
\end{center}
Kuhn poker is a standard benchmark in the game-theoretic literature and is widely used as a baseline for evaluating algorithms. However, its small size makes it straightforward to solve exactly. Even so, we observe improvements with our PSSN methods, which switch at a loose threshold of $10^{-2}$, over the last-iterate PRM$^+$ algorithm, and our methods also outperform PRM$^+$ with quadratic averaging and alternation.

\textbf{Layered graph security games} ~\citep{Cerny-2024-Layered}.
We test on three classes of layered graph security games: the pursuit-evasion, logical-interdiction, and anti-terrorism games. Detailed descriptions of these instances can be found in \citet{Cerny-2024-Layered}.

For all layered graph security game tests, we use the random seed $2025$ together with the default parameter settings from the associated GitHub repository. The payoff matrix for the pursuit-evasion game has dimension $119 \times 887$ with $70253$ nonzeros, the payoff matrix for the logical-interdiction game has dimension $119 \times 856$ with $2931$ nonzeros, and the payoff matrix for the anti-terrorism game has dimension $119 \times 1080$ with $3756$ nonzeros.

\begin{center}
\includegraphics[width=0.95\linewidth]{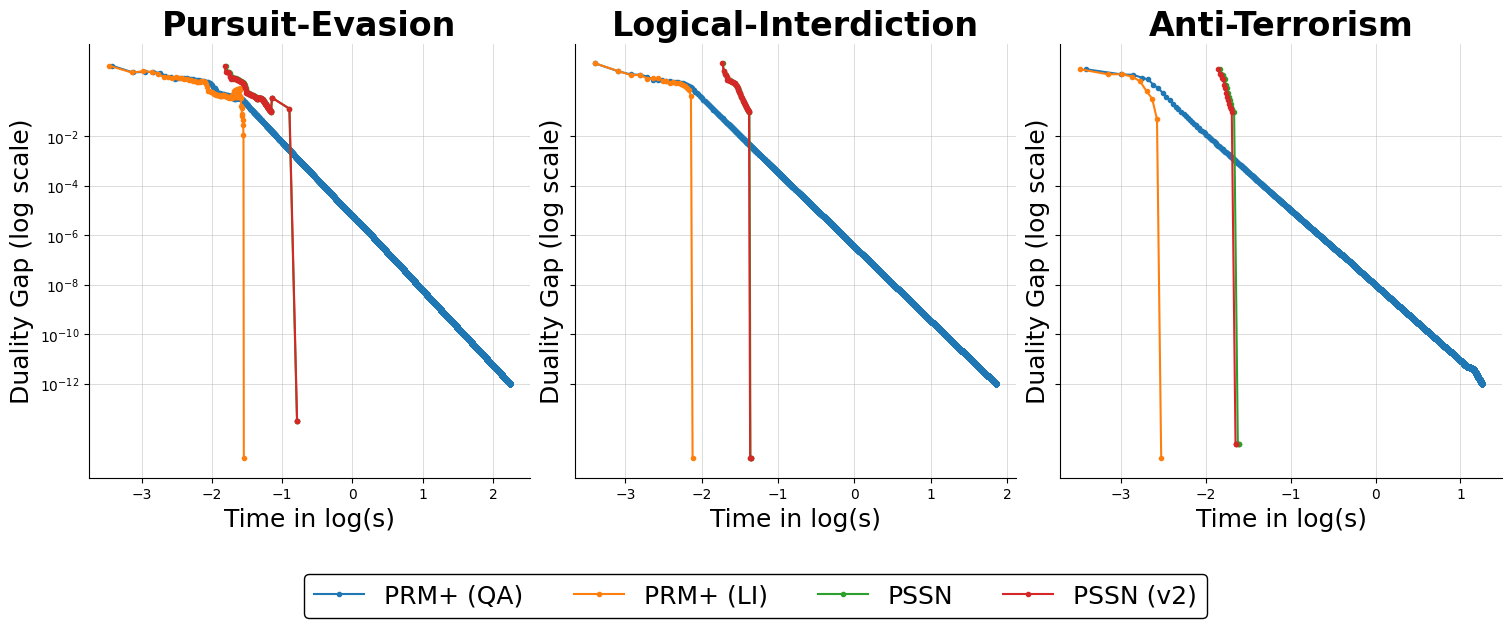}
\end{center}

In these instances, the last-iterate PRM$^+$ algorithm solves the game directly and our PSSN methods solves the game quickly after switching at a relatively loose threshold of $10^{-1}$. We note that these instances clearly do not require the use of curvature information to find an efficient solution. However, our results still outperform PRM$^+$ with quadratic averaging and alternation, the de-facto approach to game solving. This motivates our focus on random matrix games in our experiments; in random matrix instances,  PRM$^+$ struggles to efficiently converge to a high-precision solution, and we find that our PSSN algorithms can do so extremely effectively.

\subsection{Random Game Instances}
The following tests are run on 10 independently generated random seeds across two classes of random matrices. In the uniform case, each entry $A_{ij}$ is drawn uniformly from $[-1, 1]$, while in the normal case, each entry $A_{ij}$ is drawn i.i.d from a standard normal distribution. We generate these instances on matrices of size $n \times m$ for $(n, m) = (100, 100), (400, 400)$, and $(400, 800)$. 

We first test our methods across different duality gap switching thresholds. The best thresholds, in terms of lowest average clock times, are colored in \textcolor{blue}{blue}.
\begin{table}[H]
\centering
\caption{Averaged clock times (seconds) for $100\times100$ random matrix games vs thresholds.}
\label{tab:thresholds-100x100}
\begin{minipage}[H]{0.48\linewidth}
\centering
\begin{tabular}{c|c c c c}
\multicolumn{5}{c}{\textit{Random Uniform}} \\[0.25em]
& \multicolumn{4}{c}{\textbf{Duality Gap Switching Threshold}} \\
\textbf{Method} & $10^{-1}$ & $10^{-2}$ & $10^{-3}$ & $10^{-4}$ \\
\hline
PSSN v1 & \textcolor{blue}{0.088} & 0.100 & 0.113 & 0.180 \\
PSSN v2 & \textcolor{blue}{0.083} & 0.098 & 0.111 & 0.180 
\end{tabular}
\end{minipage}\hfill
\begin{minipage}[H]{0.48\linewidth}
\centering
\begin{tabular}{c|c c c c}
\multicolumn{5}{c}{\textit{Random Normal}} \\[0.25em]
& \multicolumn{4}{c}{\textbf{Duality Gap Switching Threshold}} \\
\textbf{Method} & $10^{-1}$ & $10^{-2}$ & $10^{-3}$ & $10^{-4}$ \\
\hline
PSSN v1 & 0.118 & \textcolor{blue}{0.106} & 0.123 & 0.182 \\
PSSN v2 & 0.116 & \textcolor{blue}{0.105} & 0.120 & 0.186 
\end{tabular}
\end{minipage}
\end{table}

\begin{table}[H]
\centering
\caption{Averaged clock times (seconds) for $400\times400$ random matrix games vs thresholds.}
\label{tab:thresholds-400x400}
\begin{minipage}[H]{0.48\linewidth}
\centering
\begin{tabular}{c|c c c c}
\multicolumn{5}{c}{\textit{Random Uniform}} \\[0.25em]
& \multicolumn{4}{c}{\textbf{Duality Gap Switching Threshold}} \\
\textbf{Method} & $10^{-4}$ & $10^{-5}$ & $10^{-6}$ & $10^{-7}$ \\
\hline
PSSN v1 & 3.154 & \textcolor{blue}{2.645} & 4.213 & 6.866 \\
PSSN v2 & 2.939 & \textcolor{blue}{2.424} & 17.269 & 23.693 \\
\end{tabular}
\end{minipage}\hfill
\begin{minipage}[H]{0.48\linewidth}
\centering
\begin{tabular}{c|c c c c}
\multicolumn{5}{c}{\textit{Random Normal}} \\[0.25em]
& \multicolumn{4}{c}{\textbf{Duality Gap Switching Threshold}} \\
\textbf{Method} & $10^{-4}$ & $10^{-5}$ & $10^{-6}$ & $10^{-7}$ \\
\hline
PSSN v1 & 3.762 & \textcolor{blue}{2.645} & 4.519 & 7.132 \\
PSSN v2 & 3.446 & \textcolor{blue}{3.068} & 26.445 & 39.878 \\
\end{tabular}
\end{minipage}
\end{table}

\begin{table}[H]
\centering
\caption{Averaged clock times (seconds) for $400\times800$ random matrix games vs thresholds.}
\label{tab:thresholds-400x800}
\begin{minipage}[H]{0.48\linewidth}
\centering
\begin{tabular}{c|c c c c}
\multicolumn{5}{c}{\textit{Random Uniform}} \\[0.25em]
& \multicolumn{4}{c}{\textbf{Duality Gap Switching Threshold}} \\
\textbf{Method} & $10^{-4}$ & $10^{-5}$ & $10^{-6}$ & $10^{-7}$ \\
\hline
PSSN v1 & 9.070 & 5.940 & \textcolor{blue}{4.648} & 6.839 \\
PSSN v2 & 9.578 & \textcolor{blue}{4.323} & 5.125 & 74.38 \\
\end{tabular}
\end{minipage}\hfill
\begin{minipage}[H]{0.48\linewidth}
\centering
\begin{tabular}{c|c c c c}
\multicolumn{5}{c}{\textit{Random Normal}} \\[0.25em]
& \multicolumn{4}{c}{\textbf{Duality Gap Switching Threshold}} \\
\textbf{Method} & $10^{-4}$ & $10^{-5}$ & $10^{-6}$ & $10^{-7}$ \\
\hline
PSSN v1 & 35.320 & \textcolor{blue}{5.306} & 33.993 & 10.827 \\
PSSN v2 & 34.497 & \textcolor{blue}{5.094} & 89.87 & 119.25 \\
\end{tabular}
\end{minipage}
\end{table}

Across sizes and distributions, the best switching threshold is not universal and depends heavily on matrix dimension and problem structure. We note that smaller, better-conditioned matrix games (such as the randomly generated $100 \times 100$ instances) benefit from earlier or immediate switching, while larger, more ill-conditioned matrix games (such as the randomly generated $400 \times 400$ and $400 \times 800$ instances) require tighter thresholds for switching to the SSN phase. 
At the same time, we also note the tradeoff implicit in updates of the damping parameter: updating too aggressively can destabilize the SSN phase, whereas setting the threshold too tightly can delay the switch, causing PRM$^+$ to run excessively long before the benefits of second-order acceleration are realized.

\begin{table}[H]
\centering
\caption{Averaged clock times (seconds) for $100\times100$ random matrix games.}
\label{tab:times-100x100}

\begin{tabular}{c|c c c c c c}
\multicolumn{7}{c}{\textit{Random Normal}} \\[0.25em]
& \multicolumn{6}{c}{\textbf{Time to Reach Duality Gap Tolerance}} \\
\textbf{Method} & $10^{-2}$ & $10^{-4}$ & $10^{-6}$ & $10^{-8}$ & $10^{-10}$ & $10^{-12}$ \\
\hline
PRM$^+$ (LI) & 0.008 & 0.120 & 4.476 & 31.205 &        &        \\
PRM$^+$ (QA) & 0.008 & 0.050 & 0.471 & 6.238  & 26.312 &        \\
PSSN v1   & 0.035 & 0.105 & 0.116 & 0.117  & 0.118  & 0.118  \\
PSSN v2   & 0.035 & 0.102 & 0.114 & 0.115  & 0.115  & 0.116  \\
\end{tabular}

\vspace{0.75em}

\begin{tabular}{c|c c c c c c}
\multicolumn{7}{c}{\textit{Random Uniform}} \\[0.25em]
& \multicolumn{6}{c}{\textbf{Time to Reach Duality Gap Tolerance}} \\
\textbf{Method} & $10^{-2}$ & $10^{-4}$ & $10^{-6}$ & $10^{-8}$ & $10^{-10}$ & $10^{-12}$ \\
\hline
PRM$^+$ (LI) & 0.009 & 0.077 & 5.170 & 30.808 &        &        \\
PRM$^+$ (QA) & 0.008 & 0.041 & 0.396 & 8.452  & 27.946 &        \\
PSSN v1   & 0.022 & 0.070 & 0.086 & 0.087  & 0.088  & 0.088  \\
PSSN v2   & 0.020 & 0.066 & 0.081 & 0.082  & 0.083  & 0.083  \\
\end{tabular}

\end{table}

\begin{table}[H]
\centering
\caption{Averaged clock times (seconds) for $400\times400$ random matrix games.}
\label{tab:times-400x400}

\begin{tabular}{c|c c c c c c}
\multicolumn{7}{c}{\textit{Random Normal}} \\[0.25em]
& \multicolumn{6}{c}{\textbf{Time to Reach Duality Gap Tolerance}} \\
\textbf{Method} & $10^{-2}$ & $10^{-4}$ & $10^{-6}$ & $10^{-8}$ & $10^{-10}$ & $10^{-12}$ \\
\hline
PRM$^+$ (LI) & 0.005 & 0.088 & 3.140 & 22.092 &       &       \\
PRM$^+$ (QA) & 0.005 & 0.034 & 0.332 & 4.258  & 18.042 &       \\
PSSN v1   & 0.075 & 0.463 & 2.608 & 2.631  & 2.641  & 2.645 \\
PSSN v2   & 0.077 & 0.481 & 2.972 & 3.052  & 3.064  & 3.068 \\
\end{tabular}

\vspace{0.75em}

\begin{tabular}{c|c c c c c c}
\multicolumn{7}{c}{\textit{Random Uniform}} \\[0.25em]
& \multicolumn{6}{c}{\textbf{Time to Reach Duality Gap Tolerance}} \\
\textbf{Method} & $10^{-2}$ & $10^{-4}$ & $10^{-6}$ & $10^{-8}$ & $10^{-10}$ & $10^{-12}$ \\
\hline
PRM$^+$ (LI) & 0.006 & 0.054 & 3.608 & 20.130 &       &       \\
PRM$^+$ (QA) & 0.005 & 0.027 & 0.268 & 5.755  & 18.877 &       \\
PSSN v1   & 0.075 & 0.375 & 2.538 & 2.629  & 2.640  & 2.645 \\
PSSN v2   & 0.075 & 0.375 & 2.319 & 2.410  & 2.417  & 2.424 \\
\end{tabular}

\end{table}

In \Cref{tab:times-100x100} and \Cref{tab:times-400x400}, we report average clock times across $10$ independent random uniform and normal matrix games of sizes $100 \times 100$ and $400 \times 400$, respectively.  For the smaller $100 \times 100$ instances, results are shown with a switching threshold of $10^{-1}$, which reflects the fact that such problems are well-conditioned and allow earlier switching into the DRSSN phase without loss of stability. In contrast, the results for the larger, more ill-conditioned $400 \times 400$ instances are shown with a tighter switching threshold of $10^{-5}$ to ensure that the DRSSN method is initialized within a local convergence regime. These thresholds correspond to the optimal switching values identified in \Cref{tab:thresholds-100x100} and \Cref{tab:thresholds-400x400}, and are used here to provide an accurate performance comparison. Again, the results across these two tables highlight how the problem size and conditioning affect the choice of switching threshold, and illustrate the robustness of our hybrid method in both moderate and large-scale random matrix settings.

\section{LLM Usage}
The authors used ChatGPT to aid/polish writing.

\end{document}